\DeclareMathOperator{\ame}{AME}
\DeclareMathOperator{\iroa}{IrOA}
\DeclareMathOperator{\lu}{\stackrel{\text{LU}}{\sim}}
\DeclareMathOperator{\wfncomp}{\Psi}
\DeclareMathOperator{\wfnind}{\Psi}
\DeclareMathOperator{\wfncompp}{\Phi}
\DeclareMathOperator{\wfnindp}{\Phi}
\newcommand{\rl}[1]{\left(#1\right)}
\newcolumntype{?}[1]{!{\vrule width 2pt }}
\newtheorem{theorem}{Theorem}
\newtheorem{corollary}{Corollary}
\newtheorem{definition}{Definition}
\newtheorem{proposition}{Proposition}
\begin{document}

\title[]{Local unitary equivalence of absolutely maximally entangled states constructed from orthogonal arrays}
\author{N Ramadas${}^{1,2}$, Arul Lakshminarayan${}^{1,2}$}
\address{$^1$ Department of Physics, Indian Institute of Technology
Madras,\\ Chennai~600036, India}
\address{$^2$Center for Quantum Information, Communication and  Computation,\\ Indian Institute of Technology Madras, Chennai ~600036, India}
\eads{\mailto{ic37306@imail.iitm.ac.in}, \mailto{arul@physics.iitm.ac.in}}
\begin{abstract}
The classification of multipartite entanglement is essential as it serves as a resource for various quantum information processing tasks. This study concerns a particular class of highly entangled multipartite states, the so-called absolutely maximally entangled (AME) states. 
These are characterized by maximal entanglement across all possible bipartitions. In particular we analyze the local unitary equivalence among AME states using invariants. One of our main findings is that the existence of special  irredundant orthogonal arrays implies the existence of an infinite number of equivalence classes of AME states constructed from these. In particular, we show that there are infinitely many local unitary inequivalent three-party AME states for local dimension $d > 2$ and five-party AME states for $d \geq 2$.

\end{abstract}

\maketitle
\section{Introduction}
Practical applications of quantum theory, such as in quantum computing,  information processing and communication, typically involve many-body systems. Understanding the non-classical correlations in these systems is crucial, with quantum entanglement being particularly significant among them. While entanglement in pure bipartite states is well understood, it remains a challenge in the multipartite case. For example the notion of a maximally entangled state does not readily extend
as many particles can share entanglement differently due to the monogamy of entanglement. The 3 qubit GHZ state has vanishing entanglement among the qubits, but is considered highly entangled in the tripartite sense.

The present study focuses on one promising approach to the class of highly entangled pure multipartite states, called absolutely maximally entangled (AME) states \cite{helwigAbsoluteMaximalEntanglement2012}. AME states are such that there is maximal entanglement across any bipartition. There are an exponential (in number of particles) number of such bipartitions and these states are naturally very special in maximizing the entanglement in all these cases. They are useful in several applications including quantum error correction \cite{scottMultipartiteEntanglementQuantumerrorcorrecting2004}, quantum parallel transportation, secret sharing \cite{helwigAbsoluteMaximalEntanglement2012}, and holography \cite{pastawskiHolographicQuantumErrorcorrecting2015}. 
Thus special techniques have been devised to construct them, when possible. For example, they have been constructed from classical combinatorial designs such as orthogonal Latin squares and orthogonal arrays \cite{goyenecheGenuinelyMultipartiteEntangled2014, goyenecheAbsolutelyMaximallyEntangled2015,clarisseEntanglingPowerPermutations2005}.
A simple nontrivial example is the AME$(4,3)$ state, of four qutrits:
\begin{equation}
\label{eq:AME43}
\begin{split}
\ket{\Psi_{4,3}}=& \frac{1}{3}\left(\ket{0000}+\ket{0111}+\ket{0222}+\ket{1012}\right. +\\ &\left. \ket{1120}+\ket{1201}+\ket{2021}+\ket{2102}+\ket{2210}\right),
\end{split}
\end{equation}
which follows from an orthogonal Latin square of size $3$.
This is also an example of a {\it minimal support} AME state as it contains $d^{N/2}$ terms, every AME state will have at least these many terms in any single particle basis.

The determination of the existence of AME states of $N$ qudits, denoted as $\ame(N,d)$, remains open in general. This is number 35 in a list of open problems in quantum information \cite{OpenQuantumProblems}. A table of known constructions is provided in \cite{huberTableAMEStates}. In the case of qubits ($d=2$), it is established that AME states exist only for $N=2,3,5$, and $6$ \cite{higuchiHowEntangledCan2000, bennettMixedstateEntanglementQuantum1996,scottMultipartiteEntanglementQuantumerrorcorrecting2004,huberAbsolutelyMaximallyEntangled2017}. Only for these specific values of $N$, AME states exist for all local dimensions $d \geq 2$ \cite{goyenecheEntanglementQuantumCombinatorial2018}. In the case of $N=4$, AME states constructed using orthogonal Latin squares prove their existence for all $d\geq 3$, except for $d=6$. The problem of the existence of orthogonal Latin squares of order 6, famously known as ``Euler's 36 officers problem", has no classical solution \cite{tarryProbleme36Officiers1900}. A quantum solution, provided by an AME(4,6) state, was first given in \cite{ratherThirtysixEntangledOfficers2022}. The nature of the solution and geometric aspects were investigated in subsequent studies \cite{zyczkowskiUnderstandingQuantumSolution2023,zyczkowskiQuantumVersionEuler2023}. It was later demonstrated that an infinite number of inequivalent solutions exists \cite{ratherAbsolutelyMaximallyEntangled2023}. Some new solutions to the problem have recently been obtained \cite{ratherConstructionPerfectTensors2023,bruzdaTwounitaryComplexHadamard2024}.

The interconvertibility among states using local operations plays a crucial role in the understanding and classifying multipartite entanglement \cite{krausLocalUnitaryEquivalence2010}. Two $N$-party states $\ket{\Psi}$ and $\ket{\Psi'}$ are classified in the same local unitary (LU) class if they are intercovertible using LU operations, that is, there exist a set of local unitary operators $u_1,u_2,\ldots,u_N$ such that $ \ket{\Psi'} = \rl{u_1 \otimes u_2 \otimes \cdots \otimes u_N }\ket{\Psi} $. These local unitary operations do not change entanglement content in a state. When local operations, including measurements, are combined with classical communication among the parties, the entanglement is observed to remain non-increasing. This enables a classification using local operations and classical communication (LOCC) \cite{bennettConcentratingPartialEntanglement1996,bennettExactAsymptoticMeasures2000}, if a state $\ket{\psi}$ can be converted to $\ket{\phi}$ using LOCC, one may associate an ordering, with $\ket{\psi}$ having larger entanglement than $\ket{\phi}$. Two states are LOCC equivalent if they can be converted into each other. Stochastic LOCC (SLOCC)\cite{durThreeQubitsCan2000,acinGeneralizedSchmidtDecomposition2000} offers a coarser classification, examining whether a state can be converted to another using LOCC with a non-vanishing probability of success. Two states $\ket{\Psi}$ and $\ket{\Psi'}$ are SLOCC equivalent if $ \ket{\Psi'} = \rl{A_1 \otimes A_2 \otimes \cdots \otimes A_N }\ket{\Psi} $, where $A_1,A_2,\ldots,A_N$ are invertible matrices. 
A now classic result shows that 3 qubits can be in one of 6 SLOCC classes, of which 2 are genuinely entangled, the GHZ and W states \cite{durThreeQubitsCan2000}.
It is also known that 4 qubits have an infinity of SLOCC classes, sometimes classified into 9 families \cite{verstraeteFourQubitsCan2002}.

It has been noted that there is a connection between the geometry of an SLOCC class and the existence of a special type of state called a critical state within the class. A state is critical if all of its single-party reduced density matrices are proportional to the identity \cite{gourNecessarySufficientConditions2011},
otherwise called 1-uniform. The class of critical states includes stabilizer states, cluster states, and all $k$-uniform states, including AME states \cite{burchardtStochasticLocalOperations2020}.  
A consequence of the Kempf-Ness theorem \cite{kempfLengthVectorsRepresentation1979} is that a SLOCC class is topologically closed if and only if it contains a critical state, and that this is unique up to LU equivalence \cite{gourNecessarySufficientConditions2011}. Therefore, two critical states belong to the same SLOCC equivalence class if and only if they are LU equivalent. This implies that SLOCC (and thus LOCC) equivalence is equivalent to LU equivalence for AME states, as these states are special classes of critical states \cite{burchardtStochasticLocalOperations2020}. Consequently, focusing on LU equivalence is sufficient to classify the SLOCC equivalence classes of AME states.

The problem of LU equivalence of two given multi-qubit pure states was addressed by Kraus \cite{krausLocalUnitaryEquivalence2010} and later generalized for higher dimensions \cite{liuLocalUnitaryClassification2012}. However, these methods fail in the case of AME states due to the maximal entanglement present across any bipartition \cite{burchardtStochasticLocalOperations2020,krausLocalUnitaryEquivalence2010}. In the case of AME states, polynomial invariants \cite{grasslComputingLocalInvariants1998,rainsPolynomialInvariantsQuantum2000}, which remain unchanged under local unitary transformations, have been found useful.
A complete set of such polynomial LU invariants, proposed in \cite{kodiyalamCompleteSetNumerical2004}, was used in a previous study \cite{ratherAbsolutelyMaximallyEntangled2023} to investigate LU equivalence of four-party AME states. Using these invariants, it was demonstrated that there are infinitely many LU equivalence classes of AME$(4,d)$ states for all $d > 3$.

In this study, we use the complete set of LU invariants to further explore the LU equivalence of AME states. It is observed that these LU invariants in the complete set factorize in the case of composite states formed by combining qudits of lower dimensions. This property enables the study of their LU equivalence, particularly for composite AME states , which can be constructed through a similar procedure.
In \cite{ratherAbsolutelyMaximallyEntangled2023}, it was shown that the existence of a pair of orthogonal Latin squares of order $d$, satisfying the condition $d^2>4d-3$ implies the existence of an infinite number of $\ame(4,d)$ states. We extend this result to orthogonal arrays, which generalize  orthogonal Latin squares. In particular, AME states constructed from special irredundant orthogonal arrays \cite{goyenecheGenuinelyMultipartiteEntangled2014} are considered and found to imply the existence of infinitely many LU-inequivalent $\ame(N,d)$ states. 
This is applied, along with direct computation of LU invariants, to investigate the LU equivalence among $\ame(3,d)$ and $\ame(5,d)$ states. It is shown that there are infinitely many LU equivalence classes of $\ame(3,d)$ states for $d > 2$ and of $\ame(5,d)$ states for $d \geq 2$.

\section{Preliminaries and definitions}\label{sec:preliminaries}
In this section we review necessary background on AME states, irredundant orthogonal arrays, and local unitary equivalence.

\subsection{AME states}
Consider a system of $N$ particles with a total Hilbert space $\mathcal{H}^N_d =    \mathcal{H}^{(1)} \otimes \cdots \otimes \mathcal{H}^{(N)}$. The dimension of each Hilbert space is equal to $d$. A state of the system $\ket{\Psi}$ can be decomposed in the computational basis as
\begin{align}
\begin{aligned}
\ket{\Psi} = \sum_{j_1,j_2,\ldots,j_N =0}^{d-1} \Psi_{j_1,j_2,\ldots,j_N} \ket{j_1,j_2,\ldots,j_N},
\end{aligned}
\end{align}
where $\Psi_{j_1,j_2,\ldots,j_N} \in \mathbb{C}$. 

The aim is to understand the entanglement between various bipartitions of the system by computing the reduced density matrices and corresponding entanglement entropy. The reduced density matrix of a subsystem consisting of first $k$ parties is given by $\rho_{1,2,\ldots,k} = A A^\dagger$, where $A$ is a $d^k \times d^{N-k}$ matrix with elements
\begin{align}\label{eq:matricization}
\begin{aligned}
(A)^{j_1,j_2,\ldots,j_k}_{j_{k+1},\ldots,j_N}&= \braket{j_1,j_2,\ldots,j_k|A| j_{k+1},\ldots,j_N}  = \Psi_{j_1,j_2,\ldots,j_N}.
\end{aligned}
\end{align}
The reduced density matrix $\rho_S$ of any other subsystem of $k$ parties $S \subset [N]$, (the notation $[N]$ is used to denote the set $\lbrace 1,2,\ldots,N \rbrace $) can be related to a reshaping of the matrix $A$, as explained in what follows. The set $S$ can be written as $S = \lbrace \sigma(1),\sigma(2),\ldots,\sigma(k) \rbrace$, where $\sigma $ is an $N$-object permutation. The permutation $\sigma$ is not unique, but it results in the same $\rho_S$ as any rearrangement of remaining parties, those are traced out, does not matter. The action of the permutation $\sigma$ on the $N$-party state $\ket{\Psi}$ is given by
\begin{align}
\begin{aligned}
P_{\sigma} \ket{\Psi} = \sum_{j_1,j_2,\ldots,j_N =0}^{d-1} (A^{R_\sigma})^{j_{\sigma(1)},j_{\sigma(2)},\ldots,j_{\sigma(k)}}_{j_{\sigma(k+1)},\ldots,j_{\sigma(N)}} \ket{j_{\sigma(1)},j_{\sigma(2)},\ldots,j_{\sigma{(N)  }}},
\end{aligned}
\end{align}
where $P_{\sigma}$ is the permutation operator mapping a basis state $\ket{j_1,j_2,\ldots,j_N} $ to $ \ket{j_{\sigma(1)},j_{\sigma(2)},\ldots,j_{\sigma{(N)  }}}$. The permutation shuffles the parties in such a way that the all the parties belonging to subsystem $S$ are now the first $k$ parties of $P_{\sigma} \ket{\Psi}$. Hence, the reduced matrix $\rho_S = A^{R_\sigma} (A^{R_\sigma})^\dagger $, where
\begin{align}
\begin{aligned}\label{eq:reshaping}
(A^{R_\sigma})^{j_{\sigma(1)},j_{\sigma(2)},\ldots,j_{\sigma(k)}}_{j_{\sigma(k+1)},\ldots,j_{\sigma(N)}} = (A)^{j_1,j_2,\ldots,j_k}_{j_{k+1},\ldots,j_N}.
\end{aligned}
\end{align}
The above matrix reshaping operations are generalization of realignment and partial transpose operations \cite{chenMatrixRealignmentMethod2003,peresSeparabilityCriterionDensity1996}.

Computing the entropy of $\rho_S$ allows us to quantify the entanglement between subsystem $S$ and its complement $S^c$. Maximal entanglement between such partitions characterizes highly entangled states known as $k$-uniform states, and in particular, absolutely maximally states.

\begin{definition}
An N-party state $\ket{ \Psi} \in \mathcal{H}^{N}_{d} $ is $k$-uniform if the reduced density matrix of any subset $S$ of particles is $\rho_S =
\frac{1}{d^{|S|}} \mathbb{I}_{d^{|S|}}$, where $S \subset [N] $ and $|S| \leq k \leq \lfloor N/2 \rfloor $.
If $k=\lfloor N/2 \rfloor$, the state is an absolutely maximally entangled state, denoted by $\ame(N,d)$.
\end{definition}

For AME states, although $\rho_S$ is proportional to identity for all $S$ with $|S| \leq \lfloor N/2 \rfloor$, in practice, only for ${}^N C_{\lfloor N/2 \rfloor}$ sets of $S$ with $|S| = \lfloor N/2 \rfloor $ need to be checked, as partial traces of identity matrices remain identity matrices. The fact that the reduced density matrices $\rho_S$ are proportional to identity for all $S$ with $|S| = \lfloor N/2 \rfloor$ also implies that the reshaped matrices $A^{R_\sigma}$ are proportional to isometries for all $\sigma \in S_N$. For even $N$, the matrix $A$ associated with an AME state is called a multiunitary matrix \cite{goyenecheAbsolutelyMaximallyEntangled2015}.
The support of a state $\ket{\Psi}$ is the number of non-zero coefficients when $\ket{\Psi}$ is decomposed in the computational basis. 
 For an AME state, the support is at least $d^{ \lfloor N/2 \rfloor }$, which is attained for  \textit{minimal support} AME states
\cite{goyenecheAbsolutelyMaximallyEntangled2015,bernalserranoExistenceAbsolutelyMaximally2017}. 
An example of a minimal support AME state has already been presented in Eq.~(\ref{eq:AME43}), and these are significant due to their connection with Maximum Distance Separable (MDS) codes \cite{goyenecheAbsolutelyMaximallyEntangled2015}.
\subsection{Irredundant Orthogonal Arrays}\label{sec:iroa_definition}
A combinatorial design is an arrangement of elements from a finite set that satisfy certain properties \cite{stinsonCombinatorialDesignsConstructions2004}.  These designs have wide range of applications in many fields including optimal design of experiments \cite{raghavaraoConstructionsCombinatorialProblems1988,atkinsonOptimumExperimentalDesigns2007} and error correcting codes \cite{hedayatOrthogonalArraysTheory2012}. These are also useful in constructing $k$-uniform states and AME states \cite{goyenecheGenuinelyMultipartiteEntangled2014}.

Latin squares, orthogonal Latin squares, and orthogonal arrays \cite{colbournHandbookCombinatorialDesigns2007} are some examples of combinatorial designs. A Latin square is a $d \times d$ square arrangement of symbols $\lbrace0,1,\ldots,d-1 \rbrace$ such that each symbol occurs exactly once in each row and column. Two Latin squares are orthogonal if all pairs occur exactly once when they are superimposed. A set of pairwise orthogonal Latin squares form a set of mutually orthogonal Latin squares (MOLS). These constructions can be further generalized to mutually orthogonal Latin hypercubes and orthogonal arrays, first introduced by C.~R.~Rao in 1946 \cite{raoHypercubesStrengthLeading1946}.

\begin{definition}
An array with $r$ rows and $N$ columns with entries taken from a set of symbols $\lbrace 0,1,\ldots, d-1 \rbrace$ is an orthogonal array $OA(r,N,d,k)$ if each set of $k$ columns contains all $d^k$ possible combination of the symbols, each occurring $\lambda $ number of times along the rows \cite{hedayatOrthogonalArraysTheory2012}. The number $\lambda$ is called the index of the orthogonal array. It satisfies $\lambda= r/d^k$.
\end{definition}

An example of an orthogonal array is given by
\begin{align}\label{eq:example_oa}
\begin{aligned}
\text{OA}(9,3,3,2) = \begin{array}{ccc}
0&0&0\\ 0&1&1 \\0&2&2\\1&0&1 \\1&1&2\\1&2&0\\ 2&0&2\\2&1&0\\2&2&1
\end{array}.
\end{aligned}
\end{align}
In this array with 9 rows and 3 columns, the symbols are from the set $\lbrace 0,1,2 \rbrace$. Any two columns of the array contains all 9 pairs of these symbols, and the index of the orthogonal array $\lambda =1$. Note that this array is also $\text{OA}(9,3,3,1)$, with $\lambda=3$. In general an $\text{OA}(r,N,d,k)$ is also $\text{OA}(r,N,d,k')$ if $k'<k$.

We can construct the following three qutrit (spin-1) state from the orthogonal array:
\begin{align}
\label{eq:AME33}
\begin{aligned}
\ket{\Phi} &= \frac{1}{3} \bigg( \ket{0,0,0} +\ket{0,1,1} + \ket{0,2,2} + \ket{1,0,1} + \ket{1,1,2} \\&+ \ket{1,2,0}  + \ket{2,0,2} + \ket{2,1,0} + \ket{2,2,1}\bigg). 
\end{aligned}
\end{align}
It is straightforward to verify that the state is an $\ame(3,3)$ state by computing the reduced density matrices single parties. It also illustrates the construction of an AME state from an orthogonal array. However, not all orthogonal arrays can be used to construct AME states; they must satisfy additional constraints. This leads to the concept of irredundant orthogonal arrays.

\begin{definition}
An orthogonal array  $\text{OA}(r,N,d,k)$, is irredundant if every subset of $N-k$ columns contains no repeated rows \cite{goyenecheGenuinelyMultipartiteEntangled2014}. Such an array is specially  denoted as $\iroa(r,N,d,k)$.
\end{definition}
The example given above in Eq~\ref{eq:example_oa} is an irredundant orthogonal array given by $\iroa(9,3,3,1)$, meaning that no rows are repeating in any two columns of the array. Note that it is not an $\iroa(9,3,3,2)$, as the same array as  $\text{OA}(9,3,3,2)$ is not irredundant.  A given $\iroa(r,N,d,k)$ enables the construction of a $k$-uniform state. If the array is given by
\begin{align}
\begin{aligned}
\iroa(r,N,d,k) = 
\begin{array}{cccc}
s_{1}^{1}& s_{1}^{2}&\cdots& s_{1}^{N}\\
s_{2}^{1}& s_{2}^{2}&\cdots& s_{2}^{N}\\
\vdots& \vdots&\vdots &\vdots \\
s_{r}^{1}& s_{r}^{2}&\cdots& s_{r}^{N}\\
\end{array}
\end{aligned}
\end{align}
with $s^{i}_{j} \in \lbrace 0,1,\ldots,d-1\rbrace$ for $i=1,\ldots,r$ and $j=1,\ldots,N$,
then the state
\begin{align}
\begin{aligned}\label{eq:iroa}
\ket{\Psi} =\frac{1}{\sqrt{r}} \sum_{j = 1}^{r}  e^{i \theta_{\pmb{s}_j}} \ket{\pmb{s}_j},
\end{aligned}
\end{align}
where $ \pmb{s}_j = (s_{j}^{1} ,s_{j}^{2}, \cdots,s_{j}^{N})$ and $\theta_{\pmb{s}_j} \in \mathbb{R}$, is a $k$-uniform state. This can be seen as follows: since there are no repeated rows in any $N-k$ columns, tracing out $N-k$ parties leaves a sum of one-dimensional projections of states composed of rows of remaining $k$ columns. Since any $k$ columns contain all possible combinations of symbols repeating $\lambda$ times, the one-dimensional projections add up to an operator that is proportional to identity.

The classical combinatorial designs can be generalized to quantum combinatorial designs \cite{goyenecheEntanglementQuantumCombinatorial2018}. These are arrangements of quantum states that satisfy certain properties. Quantum Latin squares \cite{mustoQuantumLatinSquares2016}, quantum orthogonal Latin squares, and quantum orthogonal arrays\cite{goyenecheEntanglementQuantumCombinatorial2018} are some of the quantum combinatorial designs.

\subsection{LU equivalence and invariants of LU transformations}
An $N$-party state $\ket{\Psi}$ is LU equivalent to $\ket{\Psi'}$, denoted by $\ket{\Psi} \lu \ket{\Psi'} $, if and only if there exist local unitaries $u_i \in \mathbb{U}(d)$ for $i=1,2,\ldots,N$ such that
\begin{align}
\begin{aligned}
\ket{\Psi'} = (u_1 \otimes u_2 \otimes \cdots \otimes u_N) \ket{\Psi}.
\end{aligned}
\end{align}
Equivalently, the corresponding matrices $A$ and $A'$, as defined in Eq.~\ref{eq:matricization}, are LU equivalent if 
\begin{align}
\begin{aligned}
A' = (v_1 \otimes v_2 \otimes \cdots \otimes v_k) A (v_{k+1} \otimes \cdots \otimes v_N),
\end{aligned}
\end{align}
where $v_i \in \mathbb{U}(d)$ for $i=1,2,\ldots,N$.

Invariants of local unitary transformations are crucial for understanding LU equivalence among states. A function $f$ of states, $f: \mathcal{H}^{N}_{d} \to \mathbb{C}$ is a LU invariant if 
\begin{align}
\begin{aligned}
f( \ket{\Psi}) = f((u_1\otimes u_2 \otimes\cdots \otimes u_N) \ket{\Psi})
\end{aligned}
\end{align}
for all local unitary operators $u_1,u_2,\ldots,u_N \in \mathbb{U}(d)$. Consequently, if $f( \ket{\Psi}) \neq f( \ket{\Psi'})$, then the states $\ket{\Psi}$ and $\ket{\Psi'}$ are LU inequivalent. However, it is important to note that if the values of $f$ for two states are equal, it does not mean that they are LU equivalent. For this purpose, a complete set of LU invariants is required. Two states are LU equivalent if and only if all the LU invariants within the complete set are identical.

\section{A Complete set of LU invariants}\label{cslu}
A complete set of LU invariants allows, in principle, to determine whether two multipartite states are LU equivalent. Such a set that was introduced in \cite{kodiyalamCompleteSetNumerical2004} was used to study the LU equivalence of four-party AME states \cite{ratherAbsolutelyMaximallyEntangled2023}. General definitions of these invariants for multipartite state with any number of parties and dimension is provided in this section. Additionally, we derive a result concerning the LU invariants for composite AME states.

Let $\ket{\Psi} \in \mathcal{H}^{N}_{d}$ be a state of an $N$-party system, with each party described by a Hilbert space of dimension $d$. The LU invariant is defined by 
\begin{align}
\begin{aligned}\label{eq:luinvariant}
\Psi\rl{ \sigma_1,\sigma_2,\ldots,\sigma_N }& =  \prod_{l=1}^{n} \Psi_{ i^{(1)}_{l},i^{(2)}_{l},\ldots, i^{(N)}_{l}   } \Psi^*_{i^{(1)}_{\sigma_1(l)},i^{(2)}_{\sigma_2(l)},\ldots, i^{(N)}_{\sigma_N(l)} },
\end{aligned}
\end{align}
where $n \in \mathbb{N}$ is a natural number and $\sigma_1,\sigma_2,\ldots,\sigma_N \in S_n$ are permutation of $n$ objects. In the indices $i^{(j)}_{l} \in \{1,\ldots,d\}$, the superscript $j$ denotes the $j$-th party and the subscript $l$ denotes $l$-th copy of the system. The summation is performed over all the repeated indices in the above expression. Equivalently, we can provide a definition in term of matrices. If the $A$ is a matrix defined as in Eq.~\ref{eq:matricization}, the LU invariant is given by
\begin{align}
\begin{aligned}
A\rl { \sigma_1,\sigma_2,\ldots,\sigma_N }=  \prod_{l=1}^{n}{A}^{i^{(1)}_{l},\ldots, i^{(k)}_{l}   }_{i^{(k+1)}_{l},\ldots, i^{(N)}_{l}   } \rl{ A^{i^{(1)}_{ \sigma_1(l)},\ldots, i^{(k)}_{\sigma_k(l)} }_{i^{(k+1)}_{\sigma_{k+1}(l)},\ldots, i^{(N)}_{\sigma_N(l)} } }^* 
\end{aligned}
\end{align}
Here $k = \lfloor N/2 \rfloor$. From Eq.~\ref{eq:luinvariant}, it is clear that these are polynomial invariants \cite{grasslComputingLocalInvariants1998,rainsPolynomialInvariantsQuantum2000}.

A collection of these LU invariants for all natural numbers $n\in \mathcal{N}$ and all choices of permutations $\sigma_1,\sigma_2,\ldots,\sigma_N \in S_n$ contains a complete set of LU invariants \cite{kodiyalamCompleteSetNumerical2004}.
This implies two $N$-party states $\ket{\Psi}$ and  $\ket{\Psi'}$ are LU equivalent if and only if $ \Psi\rl{ \sigma_1,\sigma_2,\ldots,\sigma_N } = \Psi'\rl{ \sigma_1,\sigma_2,\ldots,\sigma_N }  $ for all $n\in \mathbb{N}$ and $\sigma_1,\sigma_2,\ldots,\sigma_N \in S_n$. Conversely, if they disagree on any LU invariant within the complete set, the states  $\ket{\Psi}$ and  $\ket{\Psi'}$ are LU inequivalent. 
\section{LU equivalence of composite AME states}
\label{sec:luinvofnewame}
Consider two systems, each comprising $N$ qudits, where the qudits in the first system are of dimension $d_1$ and those in the second system are of level $d_2$. Let these systems are represented by normalized states $ \ket{\wfnind^{(1)}} \in \mathcal{H}^N_{d_1}$ and $ \ket{\wfnind^{(2)}} \in \mathcal{H}^N_{d_2}$, respectively.
We can express these states as follows:
\begin{align}
\begin{aligned}
\ket{\wfnind^{(1)}} = \sum_{i_1,\ldots,i_{N} = 0}^{d_1-1}  \wfnind^{(1)}_{i_1,\ldots,i_{N}} \ket{i_1,\ldots,i_{N}}\;\text{and}\;
\ket{\wfnind^{(2)}} = \sum_{j_1,\ldots,j_{N} = 0}^{d_2-1}  \wfnind^{(2)}_{j_1,\ldots,j_{N}} \ket{j_1,\ldots,j_{N}},
\end{aligned}
\end{align}
where $\left\{ \ket{i_1,\ldots,i_{N}} \right\}$ and $\left\{ \ket{j_1,\ldots,j_{N}} \right\}$ are computational bases of $\mathcal{H}^N_{d_1}$ and $\mathcal{H}^N_{d_2}$, respectively.
Next, we construct a state $\ket{\Psi }\in \mathcal{H}^N_{d_1 d_2}$ of $N$ qudits of level $d_1 d_2$ by composing pairs of qudits from these two systems:
\begin{align}
\begin{aligned}
\ket{\wfncomp} &= \sum_{i_1,\ldots,i_{N} = 0}^{d_1-1}  \sum_{j_1,\ldots,j_{N} = 0}^{d_2-1} \wfnind^{(1)}_{i_1,\ldots,i_{N}} \wfnind^{(2)}_{j_1,\ldots,j_{N}}  \ket{i_1,j_1,i_2,j_2,\ldots,i_N,j_N}\\
& \equiv \sum_{l_1,\ldots,l_{N} = 0}^{d_1d_2-1}  \wfncomp_{l_1,\ldots,l_N}  \ket{l_1,\ldots,l_N},
\end{aligned}
\end{align}
where $l_k=(i_k,j_k)$ is a composite index which takes $d_1d_2$ values, and  $  \wfncomp_{l_1,\ldots,l_N} = \wfnind^{(1)}_{i_1,\ldots,i_{N}} \wfnind^{(2)}_{j_1,\ldots,j_{N}}$.
We denote this composition procedure as 
$\ket{\wfncomp} =P (\ket{\wfnind^{(1)}} \otimes \ket{\wfnind^{(2)}})$.

The density matrix $\rho = \ket{\wfncomp}\bra{\wfncomp}=$
\begin{align}
\begin{aligned}
\sum_{\substack{i_1,\ldots,i_N,\\ i'_1,\ldots,i'_N = 0}}^{d_1 -1} \sum_{\substack{j_1,\ldots,j_N,\\ j'_1,\ldots,j'_N = 0}}^{d_2 -1} \rho^{(1)}_{i_1,\ldots,i_N;i'_1,\ldots,i'_N}  \rho^{(2)}_{j_1,\ldots,j_N;j'_1,\ldots,j'_N} \ket{i_1,j_1,\ldots,i_N,j_N} \bra{ i'_1,j'_1,\ldots,i'_N,j'_N },
\end{aligned}
\end{align}
where $\rho^{(1)} = \ket{\wfnind^{(1)}}\bra{\wfnind^{(1)}}$ and $\rho^{(2)} = \ket{\wfnind^{(2)}}\bra{\wfnind^{(2)}}$ are the density matrices of the constituent systems.
From $\rho$ we can obtain the reduced density matrix $\rho_S$ of any subsystem $S$ of $k$ qudits:
\begin{align}
\begin{aligned}
\rho_S& = \sum_{\substack{i_1,\ldots,i_k,\\ i'_1,\ldots,i'_k = 0}}^{d_1 -1} \sum_{\substack{j_1,\ldots,j_k,\\ j'_1,\ldots,j'_k = 0}}^{d_2 -1} (\rho^{(1)}_{S_1})_{i_1,\ldots,i_k;i'_1,\ldots,i'_k} (\rho^{(2)}_{S_1})_{j_1,\ldots,j_k;j'_1,\ldots,j'_k} \ket{i_1,j_1,\ldots,i_k,j_k} \bra{i'_1,j'_1,\ldots,i'_k,j'_k},
\end{aligned}
\end{align}
where $S_1$ and $S_2$ are subsystems of $k$ qudits of level $d_1$ and $d_2$, respectively. The matrices $ \rho^{(1)}_{S_1} $ and $ \rho^{(2)}_{S_2} $ are corresponding reduced density matrices for these subsystems.

From the above relation, if $ \rho^{(1)}_{S_1} = \frac{1}{d_1^k} \mathbb{I}_{d_1^k} $ and $ \rho^{(2)}_{S_2} = \frac{1}{d_2^k} \mathbb{I}_{d_2^k} $, then $\rho_S = \frac{1}{(d_1d_2)^k} \mathbb{I}_{(d_1d_2)^k}$ for any subsystem $S$. Therefore, if $\ket{\wfnind^{(1)}}$ and $\ket{\wfnind^{(2)}}$ states are AME states, then $\ket{\wfncomp} =P (\ket{\wfnind^{(1)}} \otimes \ket{\wfnind^{(2)}})$ is an AME state. The construction of composite AME states from known AME states has been previously established, as noted in \cite{helwigAbsolutelyMaximallyEntangled2013}. The result can be generalized to construct composite states of more than two AME states.
\begin{proposition}
Let $ \ket{\wfncomp}  = P (\ket{\wfnind^{(1)}} \otimes \ket{\wfnind^{(2)}}) \in \mathcal{H}^{N}_{d_1 d_2}$ be an $N$-party state constructed using the above procedure from states $ \ket{\wfnind^{(1)}} \in \mathcal{H}^{N}_{d_1}$ and $ \ket{\wfnind^{(2)}} \in \mathcal{H}^{N}_{d_2}$. Then for any natural number $n$ and permutations $\sigma_1,\ldots,\sigma_N \in S_n$ the LU invariant in Eq.~\ref{eq:luinvariant} factors as
\begin{align}
\begin{aligned}
\wfncomp\rl{ \sigma_1,\sigma_2,\ldots,\sigma_N } =  \wfnind^{(1)}\rl{ \sigma_1,\sigma_2,\ldots,\sigma_N } \wfnind^{(2)}\rl{ \sigma_1,\sigma_2,\ldots,\sigma_N }.
\end{aligned}
\end{align}

\end{proposition}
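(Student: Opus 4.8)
The plan is to substitute the product form of the composite amplitudes directly into the definition of the LU invariant and observe that everything separates cleanly. Recall that $\wfncomp_{l_1,\ldots,l_N} = \wfnind^{(1)}_{i_1,\ldots,i_N}\wfnind^{(2)}_{j_1,\ldots,j_N}$ where each composite index $l_k$ decomposes as the pair $(i_k,j_k)$ with $i_k \in \{0,\ldots,d_1-1\}$ and $j_k \in \{0,\ldots,d_2-1\}$. So I would start from
\begin{align}
\begin{aligned}
\wfncomp\rl{\sigma_1,\ldots,\sigma_N} = \prod_{l=1}^{n} \wfncomp_{L^{(1)}_l,\ldots,L^{(N)}_l}\, \wfncomp^*_{L^{(1)}_{\sigma_1(l)},\ldots,L^{(N)}_{\sigma_N(l)}},
\end{aligned}
\end{align}
where each summed composite index $L^{(p)}_l$ splits as $(i^{(p)}_l, j^{(p)}_l)$, and the sum over all repeated composite indices becomes an independent sum over all the $i$'s and all the $j$'s.

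Next I would push the factorization through the product: each factor $\wfncomp_{L^{(1)}_l,\ldots,L^{(N)}_l}$ equals $\wfnind^{(1)}_{i^{(1)}_l,\ldots,i^{(N)}_l}\,\wfnind^{(2)}_{j^{(1)}_l,\ldots,j^{(N)}_l}$, and similarly the conjugated factor at argument $\sigma_p(l)$ splits, with the crucial point that the permutation $\sigma_p$ acting on the composite copy-label $l$ is the \emph{same} permutation whether we read off the $d_1$-part or the $d_2$-part of the index — the decomposition $(i,j)$ is ``transparent'' to the copy-permutations because the invariant permutes copies, not local indices. Collecting all $d_1$-factors into one product over $l$ and all $d_2$-factors into another, the expression becomes
\begin{align}
\begin{aligned}
\Big(\prod_{l=1}^n \wfnind^{(1)}_{i^{(1)}_l,\ldots,i^{(N)}_l}\,\wfnind^{(1)*}_{i^{(1)}_{\sigma_1(l)},\ldots,i^{(N)}_{\sigma_N(l)}}\Big)\Big(\prod_{l=1}^n \wfnind^{(2)}_{j^{(1)}_l,\ldots,j^{(N)}_l}\,\wfnind^{(2)*}_{j^{(1)}_{\sigma_1(l)},\ldots,j^{(N)}_{\sigma_N(l)}}\Big),
\end{aligned}
\end{align}
and since the $i$-summations and $j$-summations are independent and range over disjoint index sets, the total sum factors as the product of the two sums, which are exactly $\wfnind^{(1)}\rl{\sigma_1,\ldots,\sigma_N}$ and $\wfnind^{(2)}\rl{\sigma_1,\ldots,\sigma_N}$.

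I expect the only real subtlety — and thus the step to state carefully rather than the step that is hard — is bookkeeping the index decomposition: making explicit that the "summation over all repeated indices" in the composite invariant is literally the product of the two independent summations over the $d_1$-indices and $d_2$-indices, and that reindexing a composite index by $\sigma_p$ is equivalent to reindexing its two components by the same $\sigma_p$. Once that identification is made, the factorization is immediate and there is no analytic content. I would also remark at the end that the same argument extends by induction to a composition of more than two AME states, and that the matrix-form invariant $A\rl{\sigma_1,\ldots,\sigma_N}$ factors identically since $A$ is just a reshaping of $\wfncomp$, so nothing changes. This proposition is the key technical input that lets the paper lift infinitude-of-LU-classes results from lower-dimensional factors to composite AME states.
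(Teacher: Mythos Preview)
Your proposal is correct and follows essentially the same route as the paper: substitute the factored amplitudes $\wfncomp_{l_1,\ldots,l_N}=\wfnind^{(1)}_{i_1,\ldots,i_N}\wfnind^{(2)}_{j_1,\ldots,j_N}$ into the definition of the invariant, split each composite summation index into its $(i,j)$ pair, and observe that the product over copies separates into two independent sums. Your write-up is in fact more careful than the paper's about the one nontrivial bookkeeping point (that the copy-permutations act identically on both components of a composite index), which the paper leaves implicit.
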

\begin{proof}
From Eq.~\ref{eq:luinvariant} the LU invariant corresponding to the state $\ket{\wfncomp}$ is given by
\begin{align}
\begin{aligned}
\wfncomp\rl{ \sigma_1,\sigma_2,\ldots,\sigma_N }  &= \sum_{l^{(1)}_{1}, l^{(2)}_{1}, \cdots ,l^{(N)}_{n} = 0}^{d-1} \prod_{k=1}^{n} \wfncomp_{l^{(1)}_{k},l^{(2)}_{k},\cdots ,l^{(N)}_{k}} \wfncomp^*_{l^{(1)}_{\sigma_1(k)},l^{(2)}_{\sigma_2(k)},\ldots, l^{(N)}_{\sigma_N(k)}} \\
& =  \sum_{i^{(1)}_{1}, i^{(2)}_{1}, \cdots ,i^{(N)}_{n} = 0}^{d_1-1}
\sum_{j^{(1)}_{1}, j^{(2)}_{1}, \cdots ,j^{(N)}_{n}= 0}^{d_2-1} \prod_{k=1}^{n}   \wfnind^{(1)}_{ i^{(1)}_{k},\ldots,i^{(N)}_{k}} \wfnind^{(2)}_{ j^{(1)}_{k},\ldots,j^{(1)}_{k}} \\
&\quad \quad  \times (\wfnind^{(1)})^*_{ i^{(1)}_{\sigma_1(k) },\ldots,i^{(N)}_{\sigma_N(k)}} (\wfnind^{(2)})^*_{ j^{(1)}_{\sigma_1(k)},\ldots,j^{(N)}_{\sigma_N(k)}}  .
\end{aligned}
\end{align}
Simply rearranging gives
\begin{align}
\begin{aligned}
\wfncomp\rl{ \sigma_1,\sigma_2,\ldots,\sigma_N } =  \wfnind^{(1)}\rl{ \sigma_1,\sigma_2,\ldots,\sigma_N } \wfnind^{(2)}\rl{ \sigma_1,\sigma_2,\ldots,\sigma_N }
\end{aligned}
\end{align}
\end{proof}
The result is useful to surmise about the LU equivalence of composite states. Suppose $ \ket{\wfncomp}  =P (\ket{\wfnind^{(1)}} \otimes \ket{\wfnind^{(2)}}) \in \mathcal{H}_{d_1 d_2}^N$ and  $\ket{\wfncompp} =P \rl{\ket{\wfnindp^{(1)}} \otimes \ket{\wfnindp^{(2)}}} \in \mathcal{H}_{d_1 d_2}^N$, where $\ket{\wfnind^{(i)}} \in \mathcal{H}_{d_1}^N$ and $\ket{\wfnindp^{(i)}} \in \mathcal{H}_{d_2}^N$ for $i=1,2$. For any $n \in \mathbb{N}$ and choice of permutations $\sigma_1,\sigma_2,\ldots,\sigma_N \in S_n$, the following relations hold true:
\begin{align}\label{eq:two_composite_states}
\begin{aligned}
\wfncomp\rl{ \sigma_1,\sigma_2,\ldots,\sigma_N } &=  \wfnind^{(1)}\rl{ \sigma_1,\sigma_2,\ldots,\sigma_N } \wfnind^{(2)}\rl{ \sigma_1,\sigma_2,\ldots,\sigma_N },\\
\wfncompp\rl{ \sigma_1,\sigma_2,\ldots,\sigma_N } &=  \wfnindp^{(1)}\rl{ \sigma_1,\sigma_2,\ldots,\sigma_N } \wfnindp^{(2)}\rl{ \sigma_1,\sigma_2,\ldots,\sigma_N }.
\end{aligned}
\end{align}

If $\ket{\wfnind^{(i)}} \lu \ket{\wfnindp^{(i)}}$, then for all $n\in \mathbb{N}$ and for permutations $\sigma_1, \sigma_2, \ldots, \sigma_N \in S_n$, we have $\wfnind^{(i)}\rl{ \sigma_1,\sigma_2,\ldots,\sigma_N } = \wfnindp^{(i)}\rl{ \sigma_1,\sigma_2,\ldots,\sigma_N } $ for $i=1,2$. This implies $\wfncomp\rl{ \sigma_1, \sigma_2, \ldots, \sigma_N } = \wfncompp\rl{ \sigma_1, \sigma_2, \ldots, \sigma_N }$ for all such cases, and consequently, $\ket{\wfncomp} \lu \ket{\wfncompp}$. In the case where $d_1 = d_2$, if $\ket{\wfnind^{(1)}} \lu \ket{\wfnindp^{(2)}}$ and $\ket{\wfnind^{(2)}} \lu \ket{\wfnindp^{(1)}}$, a similar argument shows that $\ket{\wfncomp} \lu \ket{\wfncompp}$.

If $\ket{\Psi^{(1)}}$ is not LU equivalent to $\ket{\Phi^{(1)}}$, then there exists at least one $n\in N$ and $\sigma_1, \sigma_2, \ldots, \sigma_N \in S_n$ such that $\wfnind^{(1)}\rl{ \sigma_1, \sigma_2, \ldots, \sigma_N } \neq \wfnindp^{(1)}\rl{ \sigma_1, \sigma_2, \ldots, \sigma_N }$. For this choice of $n$ and permutations, two possibilities arise:
\begin{enumerate}
\item The invariants $\wfnind^{(2)}\rl{ \sigma_1, \sigma_2, \ldots, \sigma_N } $ and $ \wfnindp^{(2)}\rl{ \sigma_1, \sigma_2, \ldots, \sigma_N } $ are equal: In this case, if these invariants are non-zero, then $\wfncomp\rl{ \sigma_1, \sigma_2, \ldots, \sigma_N } \neq \wfncompp\rl{ \sigma_1, \sigma_2, \ldots, \sigma_N }$, implying $\ket{\wfncomp}$ is not LU equivalent to $\ket{\wfncompp}$. However, if they vanish, we cannot draw any conclusions about the LU equivalence of $\ket{\Phi}$ and $\ket{\Psi}$.
\item The invariants $\wfnind^{(2)}\rl{ \sigma_1, \sigma_2, \ldots, \sigma_N } \neq \wfnindp^{(2)}\rl{ \sigma_1, \sigma_2, \ldots, \sigma_N } $: In this case, it does not necessarily imply that $\wfncomp\rl{ \sigma_1, \sigma_2, \ldots, \sigma_N } \neq \wfncompp\rl{ \sigma_1, \sigma_2, \ldots, \sigma_N }$, since the product of the LU invariants can still be the same even if the individual LU invariants differ. 
\end{enumerate}
Note that if $\ket{\Psi^{(1)}}$ is not LU equivalent to $\ket{\Phi^{(1)}}$, there may be multiple possibilities for $n$ and permutations such that $\wfnind^{(1)}\rl{ \sigma_1, \sigma_2, \ldots, \sigma_N } \neq \wfnindp^{(1)}\rl{ \sigma_1, \sigma_2, \ldots, \sigma_N }$; each case must be examined to identify an $n$ and permutations $\sigma_1, \sigma_2, \ldots, \sigma_N \in S_n$ such that $\wfncomp\rl{ \sigma_1, \sigma_2, \ldots, \sigma_N } \neq \wfncompp\rl{ \sigma_1, \sigma_2, \ldots, \sigma_N }$. A similar analysis can be applied to other cases, such as when $\ket{\Psi^{(2)}}$ is not LU equivalent to $\ket{\Phi^{(2)}}$.

As mentioned previously, composite AME states can be constructed from known AME states using the procedure outlined above. The results presented here are applicable to these composite AME states, particularly in showing the existence of infinitely many LU equivalence classes of AME states. For example, consider a one-parameter family of AME$(N,d_1)$ states $\ket{\wfnind^{(1)}(\theta)}$ such that the LU invariant $\wfnind^{(1)}\rl{ \sigma_1,\sigma_2,\ldots,\sigma_N }$, for some choice of $n$ and permutations $\sigma_1,\sigma_2,\ldots,\sigma_N$, is a function of the parameter $\theta$. As distinct values of $\wfnind^{(1)}\rl{ \sigma_1,\sigma_2,\ldots,\sigma_N }$ correspond to different LU equivalence classes of $\ame(N,d_1)$ states, the number of these classes is infinite if the LU invariant takes infinitely many distinct values as $\theta$ varies. Since the square (or any power) of the LU invariant $\wfnind^{(1)}\rl{ \sigma_1,\sigma_2,\ldots,\sigma_N }$ remains a function of $\theta$, it implies the existence of an infinite number of LU equivalence class of $\ame(N,d_1^k)$ states for finite $k \geq 1$. Furthermore, if it is possible to find an $\ame(N,d_2)$ state $\ket{\wfnind^{(2)}}$ such that $ \wfnind^{(2)}\rl{ \sigma_1, \sigma_2, \ldots, \sigma_N } \neq 0 $ for the specified choice of $n$ and permutations, we can conclude that there are infinite number of LU equivalence classes of $\ame(N,d_1 d_2)$ states by the same argument.

\section{LU equivalence of AME states constructed from irredundant orthogonal arrays} \label{sec:iroa}
In an earlier work \cite{ratherAbsolutelyMaximallyEntangled2023}, the concept of multi-sets was used to demonstrate the existence of an infinite number of LU equivalence classes of AME$(4,d)$ states constructed from a pair of orthogonal Latin squares for all $d$, except $d=2$ and $d=6$. We extend this approach to orthogonal arrays, as they are generalizations of orthogonal Latin squares. In particular, we consider irredundant orthogonal arrays because they facilitate the construction of AME states.
\begin{theorem}\label{th:iroa}
If there exists an irredundant orthogonal array $IrOA(r,N,d,k)$ with $r > Nd - (N-1)$, then there are infinitely many LU equivalence classes of $k$-uniform states of $N$ parties of $d$ levels.
\end{theorem}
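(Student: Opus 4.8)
The plan is to exhibit, for any such array, a single invariant from the complete set of Section~\ref{cslu} whose value varies over the natural family of $k$-uniform states built from it; since such an invariant turns out to be a finite exponential polynomial in the free phases, mere non-constancy already forces infinitely many values, hence infinitely many LU classes. Concretely, write the rows of the array as $\pmb s_1,\dots,\pmb s_r$ and consider $\ket{\Psi(\pmb\theta)}=\frac1{\sqrt r}\sum_{l=1}^r e^{i\theta_l}\ket{\pmb s_l}$ for $\pmb\theta\in\mathbb R^r$, which is $k$-uniform for every $\pmb\theta$ by the argument following~(\ref{eq:iroa}). Introduce the integral $r\times Nd$ incidence matrix $M$ whose $l$-th row has a $1$ in position $(j,s_l^j)$ for each party $j=1,\dots,N$ and $0$ elsewhere; local diagonal unitaries act on the phases by $\pmb\theta\mapsto\pmb\theta+M\pmb\phi$. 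The elementary observation is that, inside each of the $N$ blocks of $d$ columns, the columns of $M$ sum to the all-ones vector $\mathbf 1_r$, and the $N-1$ equalities among these block-sums are independent relations among the columns, so $\operatorname{rank}M\le Nd-(N-1)$. Hence the hypothesis $r>Nd-(N-1)$ yields $\operatorname{rank}M<r$, so $\ker M^{\mathrm T}\subseteq\mathbb R^r$ is nonzero and, $M$ being integral, contains a nonzero integer vector $\pmb v$.

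Next I would convert $\pmb v$ into an admissible term of an invariant. Writing $\pmb v=\pmb v_+-\pmb v_-$ with $\pmb v_\pm\ge 0$ of disjoint support, the relation $M^{\mathrm T}\pmb v_+=M^{\mathrm T}\pmb v_-$ says exactly that the row-multiset $A$ containing $\pmb s_l$ with multiplicity $(\pmb v_+)_l$ and the row-multiset $B$ containing $\pmb s_l$ with multiplicity $(\pmb v_-)_l$ have the same cardinality $n$ (compare the block-sums of $M^{\mathrm T}$), with $n\ge1$ and $A\neq B$ since $\pmb v\neq0$, and with the same multiset of column-$j$ symbols for every party $j$. Ordering $A=(\pmb a_1,\dots,\pmb a_n)$ and $B=(\pmb b_1,\dots,\pmb b_n)$, for each $j$ there is a $\sigma_j\in S_n$ with $b_l^j=a^j_{\sigma_j(l)}$. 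I would then evaluate $\Psi(\sigma_1,\dots,\sigma_N)$ from~(\ref{eq:luinvariant}), with this $n$, on $\ket{\Psi(\pmb\theta)}$: a nonzero contribution requires a choice $\pmb x=(\pmb x_1,\dots,\pmb x_n)$ of rows such that each $\pmb y_l$ with $y_l^j=x^j_{\sigma_j(l)}$ is again a row, and that contribution equals $r^{-n}\exp\big(i\sum_{l=1}^n(\theta_{\pmb x_l}-\theta_{\pmb y_l})\big)=r^{-n}e^{i\langle\pmb m(\pmb x),\pmb\theta\rangle}$ with $\pmb m(\pmb x)=\sum_l(\pmb e_{\pmb x_l}-\pmb e_{\pmb y_l})\in\mathbb Z^r$. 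Thus $\Psi(\sigma_1,\dots,\sigma_N)=r^{-n}\sum_{\pmb m}c_{\pmb m}\,e^{i\langle\pmb m,\pmb\theta\rangle}$ with all $c_{\pmb m}\ge 0$; and the choice $\pmb x=A$ is admissible (then $\pmb y_l=\pmb b_l$ by construction of the $\sigma_j$) and contributes to the frequency $\pmb m=\pmb v$, so $c_{\pmb v}\ge 1$.

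From here the conclusion is forced: since all Fourier coefficients $c_{\pmb m}$ are non-negative, the coefficient $c_{\pmb v}\ge1$ at the nonzero frequency $\pmb v$ cannot be cancelled, so $\Psi(\sigma_1,\dots,\sigma_N)$ is a non-constant finite exponential polynomial on $\mathbb R^r$ and therefore takes infinitely many values; being an LU invariant, it forces the states $\ket{\Psi(\pmb\theta)}$ into infinitely many LU classes. I expect the genuinely load-bearing point to be precisely this no-cancellation step: because all coefficients of $\ket{\Psi(\pmb\theta)}$ have equal modulus $1/\sqrt r$, every monomial of the invariant carries the same positive weight $r^{-n}$, which is what rules out cancellation and lets a single admissible term do the job. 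Everything else is bookkeeping, namely the block-column relations bounding $\operatorname{rank}M$ and the dictionary between integer elements of $\ker M^{\mathrm T}$, pairs of equinumerous row-multisets with matching column symbols, and admissible tuples $(\sigma_1,\dots,\sigma_N)$. It is also worth recording that the same argument in fact produces uncountably many classes, and that the chosen invariant, depending only on $\pmb\theta$, is automatically unchanged by the diagonal-unitary gauge $\pmb\theta\mapsto\pmb\theta+M\pmb\phi$, consistently with the rank count; this is why we did not need to analyze arbitrary (non-diagonal, party-permuting) LU equivalences between states supported on the array.
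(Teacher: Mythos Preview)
Your proof is correct and follows essentially the same approach as the paper: the incidence-matrix kernel condition $M^{\mathrm T}\pmb v=0$ is precisely the paper's system of $Nd-(N-1)$ linear equations on $K=F-G$, the splitting $\pmb v=\pmb v_+-\pmb v_-$ recovers the paper's distinct multi-subsets $X,Y$ with matching column statistics, and the permutations $\sigma_j$ are built identically. The one presentational difference is the no-cancellation step---the paper specializes to a single nonzero phase $\theta$ and argues that the resulting term in $e^{i\theta}$ survives because all other coefficients are positive, whereas you keep all phases free and invoke non-negativity of all Fourier coefficients $c_{\pmb m}$; both are valid and amount to the same observation that equal moduli $1/\sqrt r$ preclude destructive interference.
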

\begin{corollary}
If $k = \lfloor N/2 \rfloor, $ the number of LU equivalence classes of AME$(N,d)$ states is infinite.
\end{corollary}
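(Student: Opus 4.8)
The plan is to exploit the phase freedom in the orthogonal-array construction. Given an $\iroa(r,N,d,k)$ with rows $\pmb{s}_1,\dots,\pmb{s}_r$, Eq.~(\ref{eq:iroa}) furnishes a whole family of $k$-uniform states $\ket{\Psi(\pmb{\theta})}=\frac{1}{\sqrt r}\sum_{i=1}^{r}e^{i\theta_i}\ket{\pmb{s}_i}$ parametrised by $\pmb{\theta}=(\theta_1,\dots,\theta_r)\in\mathbb{R}^r$. It then suffices to exhibit \emph{one} invariant $\Psi(\sigma_1,\dots,\sigma_N)$ from the complete set of Section~\ref{cslu} whose value on $\ket{\Psi(\pmb{\theta})}$ is a non-constant function of $\pmb{\theta}$: such a value is a trigonometric polynomial in $\pmb{\theta}$, so if non-constant it is assumed at infinitely many $\pmb{\theta}$ with infinitely many distinct values, and since LU invariants separate LU classes this yields infinitely many LU-inequivalent $k$-uniform states.

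First I would record the shape of an invariant restricted to this family. For any $n$ and $\sigma_1,\dots,\sigma_N\in S_n$, a term in $\Psi(\sigma_1,\dots,\sigma_N)$ is nonzero only when all $n$ ``ket'' tuples and all $n$ ``bra'' tuples of indices are rows of the array, and such a term equals $r^{-n}\exp\bigl(i\sum_{l}\theta_{u(l)}-i\sum_{l}\theta_{v(l)}\bigr)$, where $u(l),v(l)$ label the ket and bra rows of the $l$-th copy. Collecting terms by the net phase pattern, $\Psi(\sigma_1,\dots,\sigma_N)=r^{-n}\sum_{\pmb{a}\in\mathbb{Z}^r}c_{\pmb{a}}\,e^{i\langle \pmb{a},\pmb{\theta}\rangle}$ with every $c_{\pmb{a}}\ge 0$ (the $c_{\pmb{a}}$ count configurations). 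Because non-negative coefficients cannot cancel, this invariant is non-constant iff there exist $n$, permutations $\sigma_j$, and a valid configuration whose multiset of ket rows differs from its multiset of bra rows. Observe that the bra-row multiset is obtained from the ket-row multiset by permuting, column by column, the symbols among the $n$ copies (the permutation of column $j$ being $\sigma_j$), so the remaining task is purely combinatorial.

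Then I would use the counting hypothesis to produce such a configuration. Let $M$ be the $Nd\times r$ incidence matrix with $M_{(j,a),i}=1$ iff row $i$ has symbol $a$ in column $j$. For each column $j$ the $d$ rows $\{M_{(j,a),\cdot}\}_a$ sum to the all-ones vector, which gives $N-1$ linear dependencies among the $Nd$ rows of $M$; hence $\operatorname{rank}M\le Nd-(N-1)$. Since $r>Nd-(N-1)$, the kernel of $M$ contains a nonzero integer vector $\pmb{c}$, and summing the relations $(M\pmb{c})_{(j,a)}=0$ over $a$ forces $\sum_i c_i=0$. Write $\pmb{c}=\pmb{c}^{+}-\pmb{c}^{-}$ with disjointly supported non-negative integer vectors and put $n=\lVert\pmb{c}^{+}\rVert_1=\lVert\pmb{c}^{-}\rVert_1\ge 1$. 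Take as ket rows the list of array rows with multiplicities $c_i^{+}$ and as target bra rows the list with multiplicities $c_i^{-}$; for each column $j$ the equation $(M\pmb{c})_{(j,a)}=0$ says precisely that these two lists carry the same multiset of $j$-th symbols, so there is $\sigma_j\in S_n$ turning the ket list into the bra list column-wise. The resulting configuration is valid and contributes $r^{-n}e^{i\langle\pmb{c},\pmb{\theta}\rangle}$ with $\pmb{c}\ne 0$, so $\Psi(\sigma_1,\dots,\sigma_N)$ is indeed non-constant.

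The main obstacle is exactly this last translation: converting the numerical inequality $r>Nd-(N-1)$ into a genuinely phase-dependent member of the complete invariant set. The two ideas that carry it are the rank bound on $M$ (which makes room for a nontrivial kernel vector) together with the realisation that such a kernel vector is precisely the data needed to define mutually consistent column permutations $\sigma_j$; the absence of cancellation is then automatic from non-negativity of the coefficients $c_{\pmb{a}}$. Finally, the Corollary is immediate: when $k=\lfloor N/2\rfloor$ every state $\ket{\Psi(\pmb{\theta})}$ is by definition an $\ame(N,d)$ state, so the infinitely many LU classes of $k$-uniform states provided by the theorem are infinitely many LU classes of $\ame(N,d)$ states.
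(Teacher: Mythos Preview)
Your proposal is correct and follows essentially the same approach as the paper. Both arguments exploit the phase freedom in the $\iroa$ construction, reduce the search for a non-constant invariant to finding a nonzero integer solution of $Nd-(N-1)$ linear constraints (you phrase this via the rank bound on the incidence matrix $M$, the paper via the counting functions $F,G,K=F-G$), and use non-negativity of the Fourier coefficients to preclude cancellation; the Corollary then follows by specializing $k=\lfloor N/2\rfloor$.
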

\begin{proof}
In section \ref{sec:iroa_definition}, we discussed the construction of a $k$-uniform state from an irredundant orthogonal array. For completeness, we include the construction in this proof. Consider an irredundant orthogonal array IrOA$(r,N,d,k)$ given by the following $r \times N$ array:
\begin{align}
\begin{aligned}
\text{IrOA} (r,N,d,k) = \begin{array}{cccc}
s_{1}^{1}& s_{1}^{2}&\cdots& s_{1}^{N}\\
s_{2}^{1}& s_{2}^{2}&\cdots& s_{2}^{N}\\
\vdots& \vdots&\vdots &\vdots \\
s_{r}^{1}& s_{r}^{2}&\cdots& s_{r}^{N}.
\end{array}
\end{aligned}
\end{align}
Here, $s^{i}_{j} \in \lbrace 0,1,\ldots,d-1\rbrace$ for $i=1,\ldots,r$ and $j=1,\ldots,N$. Let $\mathcal{S} = \lbrace \pmb{s}_j | j=1,2,\ldots,r \rbrace$ denote the set of rows of the IrOA, 
where $\pmb{s}_j = (s_{j}^{1}, s_{j}^{1},\ldots,s_{j}^{N})$.
The IrOA allows the construction of a $k$-uniform state of $N$ parties each with $d$ levels, given by
\begin{align}
\begin{aligned}\label{eq:iroa}
\ket{\Psi} =\frac{1}{\sqrt{r}}\sum_{j = 1}^{r}  e^{i \theta_{\pmb{s}_j}} \ket{\pmb{s}_j},
\end{aligned}
\end{align}
where $\theta_{ \pmb{s}_j } \in \mathbb{R}$. In the special case when $k = \lfloor N/2 \rfloor$, it is an AME state.

For a given $n\in \mathbb{N}$ and permutations $\sigma_1,\sigma_2,\ldots,\sigma_N\in S_n$, the LU invariant, as defined in Eq~\ref{eq:luinvariant}, and repeated here for convenience, is given by 
\begin{align}\label{eq:luinvariant2}
\begin{aligned}
\Psi\rl{ \sigma_1,\sigma_2,\ldots,\sigma_N }& =  \prod_{l=1}^{n} \Psi_{ i^{(1)}_{l}i^{(2)}_{l}\cdots i^{(N)}_{l}   } \Psi^{*}_{i^{(1)}_{\sigma_1(l)}i^{(2)}_{\sigma_2(l)}\cdots i^{(N)}_{\sigma_N(l)} }.
\end{aligned}
\end{align}
The LU invariant $\Psi\rl{ \sigma_1,\sigma_2,\ldots,\sigma_N }$ is a polynomial in $e^{i \theta_{\pmb{s}_j}}$ and $e^{-i \theta_{\pmb{s}_j}}$ for all $\pmb{s}_j\in \mathcal{S}$. As distinct values of the LU invariant correspond to different LU equivalence classes, the number of these classes can be inferred from the range of the polynomial $\Psi\rl{ \sigma_1,\sigma_2,\ldots,\sigma_N }$. In particular, if the LU invariant is a non-constant polynomial in $e^{i \theta_{\pmb{s}_j}}$ and $e^{-i \theta_{\pmb{s}_j}}$ and takes infinitely many values as the parameters varied, this indicates the number of LU equivalence classes is infinite. However, it is not clear whether such an invariant exists in the infinite complete set of LU invariants. For example, all the elements in this complete set of LU invariants are constant functions corresponding to any parameter families of $\ame(4,3)$ states, as there is only one such state up to LU equivalence \cite{ratherAbsolutelyMaximallyEntangled2023}. (Parameter families of $\ame(4,3)$ states can be obtained from an $\iroa(9,4,3,2)$ using a similar construction given in Eq~\ref{eq:iroa}.)

A general term in the LU invariant $\Psi\rl{ \sigma_1,\sigma_2,\ldots,\sigma_N }$ in Eq~\ref{eq:luinvariant2} by suppressing the summation convention is
\begin{align*}
\begin{aligned}
\Xi_{X,Y} = \Psi_{ i^{(1)}_{1}i^{(2)}_{1}\cdots i^{(N)}_{1}} \times \cdots \Psi_{ i^{(1)}_{n}i^{(2)}_{n}\cdots i^{(N)}_{n}  } \times \Psi^{*}_{i^{(1)}_{\sigma_1(1)}i^{(2)}_{\sigma_2(1)}\cdots i^{(N)}_{\sigma_N(1)} } \times \cdots \times \Psi^{*}_{i^{(1)}_{\sigma_1(n)}i^{(2)}_{\sigma_2(n)}\cdots i^{(N)}_{\sigma_N(n)} },
\end{aligned}
\end{align*} 
where
$$X = \left\{ \rl{i^{(1)}_{l},i^{(2)}_{l},\ldots, i^{(N)}_{l}}  :l=1,\ldots,n \right\}$$ and $$Y = \left\{ \rl{i^{(1)}_{\sigma_1(l)}, i^{(2)}_{\sigma_2(l)}, \cdots,  i^{(N)}_{\sigma_N(l)} } :l=1,\ldots,n \right\}.$$ 
Note that $X$ and $Y$ are mult-sets, a generalization of sets, that allows for multiple occurrences of its elements. The term $\Xi_{X,Y}$ is non-vanishing if and only if
$X$ and $Y$ are multi-subsets of $\mathcal{S}$. Furthermore, if the sets $X$ and $Y$ are distinct, the phases $e^{i\theta_{\pmb{s}_j}}$ will not cancel out in the term. This can lead to the LU invariant in Eq.~\ref{eq:luinvariant2} being a non-constant polynomial in $e^{i \theta_{\pmb{s}_j}}$ and $e^{-i \theta_{\pmb{s}_j}}$. Note that these multi-subsets have same cardinality and are connected by the permutations $\sigma_1, \sigma_2,\ldots,\sigma_N$.

In what follows, we will identify necessary conditions for existence of such distinct multi-subsets $X$ and $Y$ and construct the corresponding AME state that yields a LU invariant with non-trivial dependence on the parameters. For convenience, we use the notation $\mathfrak{d} =\lbrace 0,1,\ldots,d-1\rbrace $. Define the functions $\pi_{\nu,l} : \mathfrak{d}^{\times N} \to \{0,1\}$ for $\nu = 1,2,\cdots N$ as follows:
\begin{align}
\begin{aligned}
\pi_{\nu,l} \rl{  s } = \begin{cases}
1 &~\text{if}~ s_\nu=l \\
0 &~\text{if}~ s_\nu \neq l
\end{cases},
\end{aligned}
\end{align}
where $s = (s_1,s_2,\ldots,s_N) \in \mathfrak{d}^{\times N} $. We use these functions is to count symbols at a given position. Our goal is to find two distinct multi-subsets $X$ and $Y$ of $\mathcal{S}$ of the same cardinality such that
\begin{align}
\begin{aligned}\label{eq:permutation_condition}
\sum_{s\in X}\pi_{\nu,l}(s) = \sum_{s\in Y} \pi_{\nu,l}(s),
\end{aligned}
\end{align} 
for all $\nu \in [N]$ and $l\in \mathfrak{d}$. These conditions ensure that the total count of a particular symbol $l$ occurring in the $\nu$-th position of tuples in both multi-subsets is the same, allowing the multi-subsets $X$ and $Y$ to be connected by permutations. 

Define the function $F: \mathfrak{d}^{\times N} \to \{0,1,2,3,\cdots\}$,
\begin{align}
\begin{aligned}
F(j_1,j_2,\ldots,j_N)  &= \sum_{ s \in X} \pi_{1,j_1}(s) \pi_{2,j_2}(s)\cdots\pi_{N,j_N}(s).
 \end{aligned}
\end{align}
This function counts the number of occurrences of a tuple $(j_1,j_2,\ldots,j_N)$ in the multi-subset $X$. Similarly, for $Y$, define $G: \mathfrak{d}^{\times N} \to \{0,1,2,3,\cdots\}$,
\begin{align}
\begin{aligned}
G(j_1,j_2,\ldots,j_N) & = \sum_{ s \in Y} \pi_{1,j_1}(s) \pi_{2,r_2}(s)\cdots\pi_{N,j_N}(s).
 \end{aligned}
\end{align}
Note that $\sum_{j_1,j_2,j_3,\ldots,j_N}F(j_1,j_2,\ldots,j_N) = |X|$ is the cardinality of X. Similarly, $\sum_{j_1,j_2,j_3,\ldots,j_N}G(j_1,j_2,\ldots,j_N) = |Y|$ is the cardinality of $Y$. Since $|X| = |Y|$, this implies
\begin{align}
\begin{aligned}\label{eq:total_number}
\sum_{j_1,j_2,j_3,\ldots,j_N}K(j_1,j_2,\ldots,j_N) = 0,
\end{aligned}
\end{align}
where $ K = F-G $.

We aim to express the equation Eq.~\ref{eq:permutation_condition} in terms of the function $K$. Observe that
\begin{align}
\begin{aligned}
\sum_{s\in X}\pi_{\nu,l}(s)  = \sum_{j_1,j_2,\ldots,j_N} F(j_1,j_2,\ldots,j_{\nu-1},l,j_{\nu+1},\ldots,j_N)
\end{aligned}
\end{align}
and
\begin{align}
\begin{aligned}
\sum_{s\in Y}\pi_{\nu,l}(s)  = \sum_{j_1,j_2,\ldots,j_N} G(j_1,j_2,\ldots,j_{\nu-1},l,j_{\nu+1},\ldots,j_N).
\end{aligned}
\end{align}
Taking the difference between the equations and expressing in terms of $ K(j_1,j_2,\ldots,j_N) $, we get
\begin{align}
\begin{aligned}
\sum_{j_1,j_2,\ldots,j_N} K(j_1,j_2,\ldots,j_{\nu-1},l,j_{\nu+1},\ldots,j_N) & = 0.
\end{aligned}
\end{align}
This gives a total of $Nd$ linear equations for various values of $\nu$ and $l$. However, $N-1$ of these equations are redundant, as they can be obtained from other equations. To see this, note that 
\begin{align}
\begin{aligned}
&\sum_{l}\sum_{j_1,j_2,\ldots,j_N } K(j_1,j_2,\ldots,j_{\nu-1},l,j_{\nu+1},\ldots,j_N) \\ & = \sum_{j_1,j_2,\ldots,j_N}K(j_1,j_2,\ldots,j_N) = 0.
\end{aligned}
\end{align}
These equations are same as the one in \ref{eq:total_number}. Therefore, the total number of equations is $Nd - (N-1)$. 
Solutions to these equations ensure existence of multi-subsets that satisfy the required conditions. To solve these equations, we can treat the terms $K(j_1,j_2,\ldots,j_N)$ as $r$ independent variables. A non-trivial solution to these set of equations exist if the number of variables is greater than the number of equations, that is, $r > Nd - (N-1)$. Since the coefficients of the equations are rational numbers, a rational solution exist. Appropriately rescaling the solutions gives an integral solution. The rescaled solutions are still a solution since the set of equations are homogeneous. Also, we choose $F(j_1,j_2,\ldots,j_N) $ and $G(j_1,j_2,\ldots,j_N)$ to be non-negative. Once the  multi-subsets $X$ and $Y$ are determined, the permutations $\sigma_1,\sigma_2,\ldots,\sigma_N$ can be constructed.

Assume the condition $r > Nd - (N-1)$ holds true. This implies there exist a tuple $(l_1,l_2,\ldots,l_N)$ such that $K(l_1,l_2,\ldots,l_N) \neq 0$. Therefore, if we choose 
\begin{align}
\begin{aligned}
\theta_{j_1,j_2,\ldots,j_N} = \begin{cases}
\theta &~~\text{if}~~ (j_1,j_2,\ldots,j_N) = (l_1,l_2,\ldots,l_N) \\
0 &~~\text{if}~~ (j_1,j_2,\ldots,j_N) \neq  (l_1,l_2,\ldots,l_N) ,
\end{cases}
\end{aligned}
\end{align}
there will be at least one term in the LU invariant that is a nonzero power of $e^{i\theta}$. This term will not cancel with other terms for general values of $\theta$ because all entries $\Psi_{j_1,j_2,\ldots,j_N}$ is positive except for $\Psi_{l_1,l_2,\ldots,l_N}$. Consequently, the LU invariant becomes a function of $\theta$ that takes infinitely many values as $\theta$ varies between $0$ and $2\pi$. Hence, there are an infinite number of LU equivalence classes.
\end{proof}
To illustrate the somewhat abstract arguments above, consider the AME$(3,3)$  state in Eq.~\ref{eq:AME33} constructed from the orthogonal array in Eq.~\ref{eq:example_oa}. The set of rows of this array, which is also IrOA(9,3,3,1), is
\begin{align*}
\mathcal{S} = \lbrace (0,0,0),(0,1,0),(0,2,2),(1,0,1),(1,1,2),(1,2,0),(2,0,2),(2,1,0),(2,2,1) \rbrace.
\end{align*}
The following are examples of ``multi-subsets" (in general there can be repetitions of set elements) used above:
\begin{align*}
X = \lbrace (0,0,0),(1,1,2),(2,2,1) \rbrace,\;
\text{and}\;
Y = \lbrace (0,2,2),(1,0,1),(2,1,0) \rbrace.
\end{align*}
The cardinality of these distinct sets are the same and they are connected by the permutations $\sigma_1 = (1~2~3),\sigma_2 = (3~1~2),$ and $\sigma_3 = (2~3~1)$ in the sense that if 
\begin{align*}
\begin{aligned}
X = \left\{ (i_1,j_1,k_1) ,(i_2,j_2,k_2),(i_3,j_3,k_3) \right\},
\end{aligned}
\end{align*} then \begin{align*}
\begin{aligned}
Y = \left\{ (i_{\sigma_1(1)},j_{\sigma_2(1)},k_{\sigma_3(1)}),(i_{\sigma_1(2)},j_{\sigma_2(2)},k_{\sigma_3(2)}),(i_{\sigma_1(3)},j_{\sigma_2(3)},k_{\sigma_3(3)}) \right\}.
\end{aligned}
\end{align*}
The above results shows that there are infinitely many LU equivalence classes of $\ame(3,3)$ states. The general case of $\ame(3,d)$ states is considered in the next section.

\begin{corollary}\label{cor:minimal}
If there exists an $\iroa(d^{\lfloor N/2 \rfloor}, N, d, \lfloor N/2 \rfloor)$—which is equivalent to having a minimal support $\ame(N, d)$ state—then the number of equivalence classes of $\ame(N, d)$ states is infinite for the following cases:\\
(1) $N = 4$ for all $d \geq 4$,\\
(2) $N = 5$ for all  $d\geq 5$,\\
(3) $N=7$ for all $d \geq 3$,\\
(4) $N=6$ and $N>8$ for all $d\geq 2$.
\end{corollary}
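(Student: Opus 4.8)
The plan is to obtain this as a direct specialization of Theorem~\ref{th:iroa} to the minimal support regime. First I would record the (standard) dictionary: a minimal support $\ame(N,d)$ state is precisely the data of an $\iroa(d^{\lfloor N/2\rfloor},N,d,\lfloor N/2\rfloor)$ together with a choice of phases. Indeed, such a state has exactly $d^{\lfloor N/2\rfloor}$ nonzero amplitudes, so its support is a set $\mathcal{S}\subset \mathfrak{d}^{\times N}$ of that cardinality; the argument recorded in Section~\ref{sec:preliminaries} shows that $\lfloor N/2\rfloor$-uniformity forces, on the one hand, that no row is repeated on any $N-\lfloor N/2\rfloor$ columns (so that the relevant reduced states are diagonal) and, on the other hand, that the restrictions of the rows of $\mathcal{S}$ to any $\lfloor N/2\rfloor$ columns are pairwise distinct; since $|\mathcal{S}| = d^{\lfloor N/2\rfloor} = |\mathfrak{d}^{\times\lfloor N/2\rfloor}|$, "pairwise distinct" is equivalent to "each tuple occurring exactly once", i.e. $\mathcal{S}$ is an orthogonal array of strength $\lfloor N/2\rfloor$ with index $\lambda=1$, and such an array is automatically irredundant because $N-\lfloor N/2\rfloor \ge \lfloor N/2\rfloor$. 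Conversely the construction of Section~\ref{sec:iroa_definition}, applied with all phases trivial, turns any such array back into a minimal support $\ame(N,d)$ state. (The same counting also pins the amplitude magnitudes to $1/\sqrt{d^{\lfloor N/2\rfloor}}$.)

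With the value $r = d^{\lfloor N/2\rfloor}$ in hand, Theorem~\ref{th:iroa} (with $k=\lfloor N/2\rfloor$) immediately yields infinitely many LU equivalence classes of $\ame(N,d)$ states as soon as $d^{\lfloor N/2\rfloor} > Nd-(N-1)$. Writing $Nd-(N-1)=N(d-1)+1$ and dividing by $d-1>0$, this is equivalent to the elementary inequality
\begin{align}
\begin{aligned}
\sum_{j=0}^{\lfloor N/2\rfloor-1} d^{\,j} \;=\; \frac{d^{\lfloor N/2\rfloor}-1}{d-1} \;>\; N .
\end{aligned}
\end{align}
Everything then reduces to checking this single inequality in each listed case. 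For $N=4$ and $N=5$ the left-hand side is $1+d$, giving $d\ge 4$ and $d\ge 5$ respectively; for $N=7$ the left-hand side $1+d+d^2$ equals $7$ when $d=2$ (so the inequality fails) and exceeds $7$ for $d\ge 3$; for $N=6$ one already has $1+d+d^2\ge 7>6$ at $d=2$; and for the larger $N$ in case~(4) one checks $\tfrac{d^{\lfloor N/2\rfloor}-1}{d-1}\ge 2^{\lfloor N/2\rfloor}-1>N$, since there $\lfloor N/2\rfloor\ge 4$. As consistency checks, the inequality degenerates to an equality exactly at $(N,d)=(4,3)$ and $(N,d)=(5,4)$, matching the uniqueness of $\ame(4,3)$ up to LU and the parameter-family remark inside the proof of Theorem~\ref{th:iroa}.

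I do not expect a genuine obstacle: this corollary is a bookkeeping consequence of Theorem~\ref{th:iroa}. The two points that require care are (i) stating the minimal support $\leftrightarrow$ IrOA dictionary precisely enough to justify $r=d^{\lfloor N/2\rfloor}$, and (ii) keeping the inequality \emph{strict} throughout --- strictness is what guarantees that the homogeneous linear system for $K=F-G$ in the proof of Theorem~\ref{th:iroa} admits a nonzero solution, hence two genuinely distinct multi-subsets $X\neq Y$ --- so that the ranges of $d$ in cases (1)--(3) come out exactly as stated.
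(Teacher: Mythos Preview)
Your proposal is correct and takes essentially the same approach as the paper: both reduce the corollary to checking the inequality $d^{\lfloor N/2\rfloor} > Nd-(N-1)$ from Theorem~\ref{th:iroa} in the listed ranges of $(N,d)$. Your treatment is considerably more explicit --- you spell out the minimal-support/IrOA dictionary and recast the condition via the geometric-series identity $(d^{\lfloor N/2\rfloor}-1)/(d-1)>N$ --- whereas the paper simply states that the cases ``can be identified by analyzing the condition $d^{\lfloor N/2 \rfloor}> Nd- (N-1)$''.
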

These cases can be identified by analyzing the condition $d^{\lfloor N/2 \rfloor}> Nd- (N-1)$ for various values of $N$ and $d$. For $d=2$, the condition is satisfied for $N=6$ and for all $N>8$. The other cases can be examined in a similar manner.
\section{LU equivalence classes of AME$(3,d)$ and AME$(5,d)$ states} \label{sec:results}
In this section, we apply the results discussed above to investigate the LU equivalence of AME states, focusing primarily on the cases where $N=3$ and $N=5$. It is known that among the SLOCC classes of three qubits states \cite{durThreeQubitsCan2000}, only the GHZ class contains AME states. Hence, all three qubit AME states are LU equivalent to the GHZ state:
\begin{align}
\begin{aligned}
\ket{GHZ} = \frac{1}{\sqrt{2}} \rl{ \ket{0,0,0} +\ket{1,1,1} }.
\end{aligned}
\end{align}
Thus, there is only one LU equivalence class of $\ame(3,2)$ states. The remaining cases are considered below.
\begin{theorem}
There exist an infinite number of LU equivalence classes of AME$(3,d)$ states for $d \geq 3$.
\end{theorem}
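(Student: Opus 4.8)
The plan is to deduce this from Theorem~\ref{th:iroa} applied with $N=3$. Since $\lfloor 3/2\rfloor = 1$, a $1$-uniform state of three parties is exactly an $\ame(3,d)$ state, so by Theorem~\ref{th:iroa} and its corollary it suffices to exhibit, for every $d\ge 3$, an irredundant orthogonal array $\iroa(r,3,d,1)$ whose row count satisfies the hypothesis $r > Nd-(N-1) = 3d-2$.

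First I would build such an array from a Latin square. For any $d\ge 1$ there is a Latin square $L$ of order $d$ on the symbols $\mathfrak{d}=\{0,1,\ldots,d-1\}$ — for instance the addition table $L(a,b)=a+b \bmod d$. Form the $d^2\times 3$ array $\mathcal{A}$ whose rows are the triples $(a,\,b,\,L(a,b))$ as $(a,b)$ runs over $\mathfrak{d}\times\mathfrak{d}$. The Latin-square property (each row and each column of $L$ is a permutation of $\mathfrak{d}$) translates exactly into the statement that each of the three pairs of columns $(1,2)$, $(1,3)$, $(2,3)$ of $\mathcal{A}$ contains all $d^2$ ordered pairs of symbols precisely once. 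Hence $\mathcal{A}$ is an $\mathrm{OA}(d^2,3,d,2)$; in particular it has strength $1$, and since any two of its columns contain no repeated row, it is irredundant, i.e. $\mathcal{A}=\iroa(d^2,3,d,1)$.

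Next I would check the row-count inequality. Here $r=d^2$ and $Nd-(N-1)=3d-2$, and $d^2 > 3d-2 \iff d^2-3d+2 > 0 \iff (d-1)(d-2) > 0$, which holds precisely for $d\ge 3$ (the borderline $d=3$ giving $9 > 7$, realized by the $\iroa(9,3,3,1)$ displayed in Eq.~\ref{eq:example_oa}). Applying Theorem~\ref{th:iroa} to $\mathcal{A}$ then yields a one-parameter family of $\ame(3,d)$ states for which some LU invariant is a non-constant function of the phase $\theta$, hence takes infinitely many values; since distinct values of an invariant in the complete set label distinct LU classes, there are infinitely many LU equivalence classes of $\ame(3,d)$ states.

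The substantive work is already contained in Theorem~\ref{th:iroa}, so the only real content of this theorem is the combinatorial existence claim, and the one mild point to be careful about — that $\mathcal{A}$ must simultaneously have strength $1$ and be irredundant with respect to $N-k=2$ columns — is handled automatically by the strength-two property of the Latin-square array. The remaining thing to note is that the hypothesis $r > 3d-2$ forces one to use the $d^2$-row array rather than a minimal-support $\ame(3,d)$ state: an $\iroa(d,3,d,1)$ has only $r=d$ rows and never satisfies the hypothesis, so passing to the larger array is essential to obtain a parameter-dependent invariant.
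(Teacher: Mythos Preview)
Your proof is correct and follows essentially the same route as the paper: exhibit the array $\iroa(d^2,3,d,1)$ coming from the addition-table Latin square (equivalently, the state $\frac{1}{d}\sum_{j,k}\ket{j,k,j\oplus_d k}$), verify $d^2>3d-2$ for $d\ge 3$, and invoke Theorem~\ref{th:iroa}. Your write-up is in fact more detailed than the paper's, which simply cites the existence of this $\iroa$ and the inequality; your additional remarks on why strength two of the Latin-square array guarantees irredundancy, and on why the $d$-row minimal-support array would not suffice, are accurate and helpful elaborations.
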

\begin{proof}
The state given by
\begin{align}
\begin{aligned}
\ket{\Psi_{3,d}} =\frac{1}{d} \sum_{j,k=0}^{d-1} \ket{j,k,j \oplus_d k},
\end{aligned}
\end{align}
where $\oplus_d $ denotes addition modulo $d$, is an AME$(3,d)$ state which can be constructed from an irredundant orthogonal array IrOA$(d^2,3,d,1)$ \cite{goyenecheEntanglementQuantumCombinatorial2018}. Since $d^2 > 3d-2$ for all $d\geq3$, by Theorem \ref{th:iroa}, there is an infinite number of LU equivalence classes of AME$(3,d)$ states for $d \geq 3$. 
\end{proof}

A direct computation of the LU invariant for a one-parameter family of $\ame(3,d)$ states confirms the same result. Consider the following AME$(3,d)$ state
\begin{align}
\begin{aligned}
\ket{\Psi_\theta} =\frac{1}{d} \sum_{j,k=0}^{d-1} e^{i \theta_{j,k} } \ket{ j,k,j \oplus_d k},
\end{aligned}
\end{align}
where $\theta_{0,0} = \theta$ and $ \theta_{j,k} =0 $ for all other values of $j$ and $k$. For $n = 3$ and the set of permutations
\begin{align}
\begin{aligned}
\sigma_1 = (1~ 2~ 3),~~
\sigma_2 = (2~ 3~ 1),~~
\sigma_3 &= (3~ 1~ 2),
\end{aligned}
\end{align}
the LU invariant in Eq.~\ref{eq:luinvariant} is computed as:
\begin{align}
\begin{aligned}
\Psi_{\theta}(\sigma_1,\sigma_2,\sigma_3) &= \frac{1}{d^6}\rl{d^4+ 6 (d-1)(d-2) ( \cos \theta -1)}.
\end{aligned}
\end{align}
The LU invariant is a non-trivial function of $\theta$ that takes infinitely many distinct values for $d>2$.

\begin{theorem}
There exist an infinite number of LU equivalence classes of AME$(5,d)$ states for $d \geq 2$.
\end{theorem}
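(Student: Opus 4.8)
The plan is to exhibit, for each $d\ge 2$, a one-parameter family of $\ame(5,d)$ states whose members are constructed from an irredundant orthogonal array, and then invoke Theorem~\ref{th:iroa}. The case $d=2$ is special because the condition $d^{\lfloor N/2\rfloor}>Nd-(N-1)$ from Corollary~\ref{cor:minimal} reads $4>6$ for a minimal-support $\ame(5,2)$, which fails; so a different argument is needed there. For $d\ge 3$, however, a minimal-support $\ame(5,d)$ state corresponds to an $\iroa(d^2,5,d,2)$, and the relevant inequality becomes $d^2>5d-4$, i.e. $d^2-5d+4=(d-1)(d-4)>0$, which holds for $d\ge 5$ (and fails for $d=3,4$). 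So the first step is: for $d\ge 5$ use the known minimal-support $\ame(5,d)$ construction (e.g.\ from an MDS code / $\iroa(d^2,5,d,2)$), check $d^2>5d-4$, and apply Theorem~\ref{th:iroa} directly. That disposes of all $d\ge 5$.

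The remaining cases are $d=2,3,4$, and here I would use the composite-state machinery of Section~\ref{sec:luinvofnewame} together with the known $\ame(5,2)$ state. First I would establish (or recall) that an $\ame(5,2)$ state exists and that among its one-parameter phase-deformed cousins — which need \emph{not} be minimal support, since one is free to take an $\iroa(r,5,2,2)$ with $r>Nd-(N-1)=2\cdot5-9=1$, a condition satisfied by essentially any honest $\iroa$ with more than one row — there is a family for which some invariant $\wfnind^{(2)}(\sigma_1,\dots,\sigma_5)$ is a nonconstant function of $\theta$. Concretely, one exhibits an explicit $\iroa(r,5,2,2)$ with $r\ge 2$ and runs the multiset/linear-algebra argument from the proof of Theorem~\ref{th:iroa}: since $r>Nd-(N-1)=1$ there is a nontrivial solution $K$, hence distinct multi-subsets $X\ne Y$, hence a choice of phases making a specific invariant a nonconstant trigonometric polynomial in $\theta$. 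This already yields infinitely many $\ame(5,2)$ classes, settling $d=2$. For $d=4=2\cdot 2$ one then writes $\ket{\wfncomp}=P(\ket{\wfnind^{(1)}}\otimes\ket{\wfnind^{(2)}})$ with both factors being such $\ame(5,2)$ families and uses Proposition~1: the composite invariant is the product of the factor invariants, so taking $\ket{\wfnind^{(1)}}$ fixed and $\ket{\wfnind^{(2)}}$ varying (with its invariant nonconstant and the fixed factor's invariant nonzero at the relevant $(\sigma_i)$) makes the composite invariant nonconstant, giving infinitely many $\ame(5,4)$ classes.

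The leftover case is $d=3$. Here I would argue directly from a concrete $\iroa(r,5,3,2)$ with $r>Nd-(N-1)=15-4=11$: one needs an $\ame(5,3)$ state (equivalently a $2$-uniform $5$-party qutrit state from an orthogonal array) whose underlying array has more than $11$ rows. A minimal-support state has only $9$ rows and does not qualify, but non-minimal $\ame(5,3)$ states — or $\iroa(r,5,3,2)$ with $r\ge 12$ coming from, say, padding an orthogonal array or from known $\ame(5,3)$ constructions with larger support — do. Exhibiting one such array explicitly, or citing a construction from the $\ame$ tables, then lets Theorem~\ref{th:iroa} fire. The main obstacle I anticipate is precisely this bookkeeping for the small dimensions $d=2,3,4$: Theorem~\ref{th:iroa} needs an $\iroa(r,N,d,k)$ with enough rows, and the obvious minimal-support candidates are too small, so the real work is either to produce explicit larger $\iroa(r,5,d,2)$'s (non-minimal-support $\ame(5,d)$ states), or, for $d=4$, to route around the problem via the composite/product-of-invariants trick using the already-handled $d=2$ case. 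Once a suitable array or composite decomposition is in hand in each residual case, the infinitude of LU classes follows immediately from the earlier results, with no further computation.
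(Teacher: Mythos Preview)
Your proposal has genuine gaps. First, an arithmetic slip: for $N=5$, $d=2$ one has $Nd-(N-1)=10-4=6$, not $1$; so to apply Theorem~\ref{th:iroa} at $d=2$ you would need an $\iroa(r,5,2,2)$ with $r>6$. But no such array exists. Indeed, irredundancy forces each $3$-column subarray to have $r\le 2^{3}=8$ distinct rows, so $r\in\{4,8\}$; the Rao bound rules out $\text{OA}(4,5,2,2)$, and $r=8$ with irredundancy would force strength $3$, i.e.\ $\text{OA}(8,5,2,3)$, which would correspond to five vectors in $\mathbb{F}_{2}^{3}$ with every three linearly independent---impossible since the maximal arc in $\mathrm{PG}(2,2)$ has four points. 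Hence Theorem~\ref{th:iroa} cannot be invoked at $d=2$, and your $d=4$ composite argument, which rests on the $d=2$ case, collapses with it. Separately, your ``all $d\ge 5$'' step assumes a minimal-support $\ame(5,d)$ state, but $\iroa(d^{2},5,d,2)$ is known to fail for $d=6$ and $d=10$ \cite{pangTwoThreeuniformStates2019}, so those values are not covered; and for $d=3$ you merely hope that a suitable larger $\iroa(r,5,3,2)$ with $r\ge 18$ can be found, without exhibiting one.

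The paper's proof avoids all of these existence issues by abandoning the attempt to force Theorem~\ref{th:iroa} through and instead computing an LU invariant directly on explicit one-parameter families: for $d=2$ it takes a superposition $\cos\theta\,|\tilde 0\rangle+\sin\theta\,|\tilde 1\rangle$ inside the two-dimensional code space of the $[[5,1,3]]_{2}$ perfect code and evaluates a degree-$5$ invariant to get a nonconstant function of $\theta$; for $d>2$ it uses a non-IrOA family (built from the construction in \cite{goyenecheEntanglementQuantumCombinatorial2018}) and a degree-$3$ invariant. The key idea you are missing is that when no IrOA with enough rows is available, one must produce a parametric AME family by other means and verify nonconstancy of some invariant by direct calculation rather than by the counting argument of Theorem~\ref{th:iroa}.
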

\begin{proof}
It is known that IrOA($d^2,5,d,2$) exists for all $d\neq 2,3,6,10$ \cite{pangTwoThreeuniformStates2019}. Since the condition $d^2>5d-4$ from theorem \ref{th:iroa} holds true for IrOA($d^2,5,d,2$) for $d>4$, we conclude that an infinite number of LU equivalence classes of AME$(5,d)$ states exist for all $d$, except possibly, $d=2,3,4,6,10$. These specific cases require further investigation. Nonetheless, we directly compute the LU invariant for known AME$(5,d)$ states and prove the above result. The cases $d=2$ and $d>2$ are considered separately. 

For $d=2$, we consider the following AME$(5,2)$ state
\begin{align}
\begin{aligned}
\ket{\Psi_{5,2}} =\frac{1}{\sqrt{2}}\rl{ \cos\theta \ket{\tilde{0}} + \sin\theta \ket{\tilde{1}} }
\end{aligned}
\end{align}
where
\begin{align}
\begin{aligned}
\ket{\tilde{0}}  =& \frac{1}{\sqrt{8}} \left( \ket{0, 0, 0, 0, 0} + \ket{0, 0, 1, 1, 1} - \ket{0, 1, 0, 1, 0} + 
 \ket{0, 1, 1, 0, 1} \right.\\& \left. - \ket{1, 0, 0, 0, 1} - \ket{1, 0, 1, 1, 0} - 
 \ket{1, 1, 0, 1, 1} + \ket{1, 1, 1, 0, 0} \right)
\end{aligned}
\end{align}
and 
\begin{align}
\begin{aligned}
\ket{\tilde{1}}  =&\frac{1}{\sqrt{8}} \left( \ket{0, 0, 0, 1, 1} + \ket{0, 0, 1, 0, 0} - \ket{0, 1, 0, 0, 1} + 
 \ket{0, 1, 1, 1, 0}\right.\\& \left.+ \ket{1, 0, 0, 1, 0} + \ket{1, 0, 1, 0, 1}+ 
 \ket{1, 1, 0, 0, 0} - \ket{1, 1, 1, 1, 1} \right).
\end{aligned}
\end{align}
The $\ame(5,2)$ states $ \ket{\tilde{0}} $ and $ \ket{\tilde{1}} $ span a two-dimensional subspace of AME states, corresponding to a $[[5,1,3]]_2$ perfect error correcting code capable of correcting all Pauli errors \cite{laflammePerfectQuantumError1996}. We choose $n=5$ and the permutations $\sigma_1 = (1~2~3~4~5),
~\sigma_2 = (2~1~5~3~4),
~\sigma_3 = (3~4~1~5~2),
\sigma_4 = (4~5~2~1~3),$ and $
\sigma_5 = (5~3~4~2~1),$
to compute the LU invariant in Eq.~\ref{eq:luinvariant} as
\begin{align}
\begin{aligned}
\Psi_{5,2}\rl{ \sigma_1,\sigma_2,\ldots,\sigma_5} = -\frac{1}{8} (5 \cos (8 \theta)+3).
\end{aligned}
\end{align}
The LU invariant is a function of $\theta$ and takes infinitely many distinct values implying that there an infinite number of LU equivalence classes of AME$(5,2)$.

For $d>2$, we choose the following one-parameter family of AME$(5,d)$ states 
\begin{align}
\begin{aligned}
\ket{\Psi_{5,d}} =\frac{1}{d^{3/2}} \sum_{j,k=0}^{d-1} e^{\theta_{j,k}} \ket{ j,k,j \oplus_d k} \ket{\phi_{jk}}
\end{aligned}
\end{align}
where $\ket{\phi_{jk}} = \sum_{l=0}^{d-1} e^{2\pi i jl/d} \ket{l \oplus_d k,l} $, $ \theta_{0,0}=\theta $, and $ \theta_{j,k}=0 $ for all other values of $j$ and $k$.
The one-parameter family is constructed from a state given in \cite{goyenecheEntanglementQuantumCombinatorial2018}. By choosing $n =3$, and permutations $\sigma_1 = (1,2,3),
\sigma_2 = (2,3,1),
\sigma_3 = (3,1,2),
\sigma_4 = (2,3,1),
\sigma_5 = (1,2,3),$
LU invariant is
\begin{align}
\begin{aligned}
\Psi_{5,d}(\sigma_1,\sigma_2,\ldots,\sigma_5) & = \frac{1}{d^8}\rl{d^4+ 6 (d-1)(d-2) ( \cos \theta -1)}.
\end{aligned}
\end{align}
Again, as the LU invariant takes infinitely many distinct values for $d>2$, there are an infinite number of LU equivalence classes. 
\end{proof}

For $N=6$, minimal support states are known to exist for all $d\neq 2,3,6$ \cite{huberTableAMEStates}.  Hence by Corollary.~\ref{cor:minimal}, the number of LU equivalence classes of $\ame(6,d)$ states is infinite for these cases. A similar observation has been made in \cite{burchardtStochasticLocalOperations2020} using a different approach. It is known that, for qubits, there is only one $\ame(6,2)$ state up to LU equivalence \cite{rainsQuantumCodesMinimum1999}, leaving the cases $d=3$ and $d=6$ open.
%
\section{Conclusions}
The local unitary (LU) equivalence of absolutely maximally entangled (AME) states has been investigated using a complete set of LU invariants. It has been observed that the invariants in this complete set factorize for composite states constructed from lower-dimensional qudits. This factorization is useful to study the LU equivalence among composite AME states.

We have established the following result: if there exists an irredundant orthogonal array IrOA$(r,N,d,k)$ with $r>Nd- (N-1)$, then there are infinitely many LU equivalence classes of $k$-uniform states of $N$ parties, each with local dimension $d$. This result holds for AME states, which correspond to the case where $k = \lfloor N/2 \rfloor$. However, the existence and construction of irredundant orthogonal arrays are not known for many cases of $N$ and $d$. Several constructions can be found in \cite{goyenecheGenuinelyMultipartiteEntangled2014,goyenecheEntanglementQuantumCombinatorial2018,pangTwoThreeuniformStates2019,chenConstructionsIrredundantOrthogonal2023,chenNewResults2uniform2021} and related references.

Using the result, we showed that there are infinitely many LU inequivalent $\ame(3,d)$ states for all $d > 2$. In the case of $N=5$, irredundant orthogonal arrays that satisfy the criterion $r>5d- 4$ exist for all $d \neq 2,3,4,6,10$. This allows us to apply the result in these cases. Nonetheless, a direct computation of LU invariant for five-party AME states shows that there exist an infinite number of LU equivalence classes of $\ame(5,d)$ states for $d\geq 2$. With these new findings, an updated table of known minimal number of SLOCC inequivalent AME states is presented in Table.~\ref{tab:minimal_SLOCC}.
\begin{table}[h]
\begin{center}
\begin{tabular}{cc}
\begin{minipage}{0.6\textwidth}
\begin{tabular}{|c|c|c|c|c|c|c|}
\hline
\cellcolor{yellow!20}&\cellcolor{yellow!20}2&\cellcolor{yellow!20}3&\cellcolor{yellow!20}4&\cellcolor{yellow!20}5&\cellcolor{yellow!20}6&\cellcolor{yellow!20}7\\
\hline 
\cellcolor{yellow!20}AME(3,d)&\cellcolor{green!20}1&\cellcolor{blue!20}$\infty$&\cellcolor{blue!20}$\infty$&\cellcolor{blue!20}$\infty$&\cellcolor{blue!20}$\infty$&\cellcolor{blue!20}$\infty$\\
\hline 
\cellcolor{yellow!20}AME(4,d)&\cellcolor{green!20}0&\cellcolor{green!20}1& \cellcolor{green!20} $\infty$&\cellcolor{green!20} $\infty$& \cellcolor{green!20}$\infty$& \cellcolor{green!20}$\infty$\\
\hline 
\cellcolor{yellow!20}AME(5,d)&\cellcolor{blue!20} $\infty$&\cellcolor{blue!20}$\infty$&\cellcolor{blue!20}$\infty$&\cellcolor{blue!20}$\infty$&\cellcolor{blue!20}$\infty$&\cellcolor{blue!20}$\infty$\\
\hline 
\cellcolor{yellow!20}AME(6,d)&\cellcolor{green!20}1&1&\cellcolor{green!20}$\infty$&\cellcolor{green!20}$\infty$&1& \cellcolor{green!20} $\infty$\\
\hline 
\cellcolor{yellow!20}AME(7,d)&\cellcolor{green!20} 0&1&1&1& \cellcolor{red!20}?&\cellcolor{green!20}$\infty$\\
\hline
\end{tabular}
\end{minipage}
&
\begin{minipage}{0.4\textwidth}
\tikz[baseline=(current bounding box.north)]{
    \fill[blue!20] (0,0) rectangle (0.5,0.5); 
    \draw[black] (0,0) rectangle (0.5,0.5); 
    \node[anchor=west] at (0.75,0.25) {New updates}; 
} \\
\tikz[baseline=(current bounding box.north)]{
    \fill[green!20] (0,0) rectangle (0.5,0.5); 
    \draw[black] (0,0) rectangle (0.5,0.5); 
    \node[anchor=west] at (0.75,0.25) {Proven result}; 
}
\\
\tikz[baseline=(current bounding box.north)]{
    \fill[white] (0,0) rectangle (0.5,0.5); 
    \draw[black] (0,0) rectangle (0.5,0.5); 
    \node[anchor=west] at (0.75,0.25) {Minimum known value}; 
}
\\
\tikz[baseline=(current bounding box.north)]{
    \fill[red!20] (0,0) rectangle (0.5,0.5); 
    \draw[black] (0,0) rectangle (0.5,0.5); 
    \node[anchor=west] at (0.75,0.25) {Unknown}; 
	\node at (0.25,0.25) {?};
}
\end{minipage}
\end{tabular}
\end{center}

\caption{The minimal number of SLOCC-inequivalent AME states is provided. A number 0 denotes the non-existence of the relevant state, and a question mark signifies that the status is unknown.}
\label{tab:minimal_SLOCC}
\end{table}

%

The orthogonal arrays, considered in this work, are classical combinatorial designs. An interesting question is to generalize Theorem.~\ref{th:iroa} to quantum orthogonal arrays \cite{goyenecheEntanglementQuantumCombinatorial2018}. This will allow us to study equivalence of AME states for cases of $N$ and $d$ where constructions from orthogonal arrays may not exist.
\section{Acknowledgement}
N.R. acknowledges funding from the Center for Quantum Information, Communication, and Computing, IIT Madras. 
We thank Arun Joseph for discussions on the topic.

\section*{References}

\bibliography{global_references.bib}
\bibliographystyle{unsrt}
\end{document}